\DeclareMathAlphabet{\mathcal}{OMS}{cmsy}{m}{n}
\newcommand\bcmdtab{\noindent\bgroup\tabcolsep=0pt%
  \begin{tabular}{@{}p{10pc}@{}p{20pc}@{}}}
\newcommand\ecmdtab{\end{tabular}\egroup}
\def\At{\mathit{At}}
\newcommand{\eqdef}{\mathbin{\stackrel{\mathrm{def}}{=}}}
\newcommand\tuple[1]{\langle #1 \rangle}
\def\sneg{\sim\!\!}
\def\falsif{=\!\!\!\!| \;}
\newcommand\I[1]{\mathit{#1}}
\def\N5{{\cal N}_5}
\def\X5{{\cal X}_5}
\def\qed{\hfill$\Box$}
\newtheorem{theorem}{Theorem}
\newtheorem{definition}{Definition}
\newtheorem{example}{Example}
\newtheorem{proposition}{Proposition}
\newtheorem{lemma}{Lemma}
\newtheorem{corollary}{Corollary}
\newtheorem{observation}{Observation}
\newcommand{\set}[1]{\ensuremath{\{#1\}}}
\title[Revisiting Explicit Negation in Answer Set Programming]
        {Revisiting Explicit Negation in Answer Set Programming\thanks{This work was partially supported by MINECO, Spain, grant \mbox{TIC2017-84453-P}, Xunta de Galicia, Spain (GPC ED431B 2019/03 and 2016-2019 ED431G/01, CITIC). The third author is funded by the Centre International de Math\'{e}matiques et d'Informatique de Toulouse (CIMI) through contract ANR-11-LABEX-0040-CIMI within the programme ANR-11-IDEX-0002-02 and the Alexander von Humboldt Foundation..}}
\author[F. Aguado, P. Cabalar, J. Fandinno, D. Pearce, G. P{\'e}rez, C. Vidal]{
    FELICIDAD AGUADO$^1$, PEDRO CABALAR$^1$, JORGE FANDINNO$^2$ \and DAVID PEARCE$^3$, GILBERTO P{\'E}REZ$^1$, CONCEPCI{\'O}N VIDAL$^1$\\
    \\
    $^1$ Information Retrieval Lab, Centro de Investigaci\'on en Tecnolox\'ias da Informaci\'on e as Comunicaci\'ons (CITIC),\\ Universidade da Coru\~na, Spain\\
    \email{\{aguado,cabalar,gperez,eicovima\}@udc.es}\\ \\
    $^2$ IRIT, University of Toulouse, CNRS, France\\
	\email{jorge.fandinno@irit.fr}\\
	Universit\"{a}t Potsdam, Germany\\
    \email{fandinno@uni-potsdam.de}\\ \\
    $^3$ Universidad Polit{\'e}cnica de Madrid, Spain\\
	\email{david.pearce@upm.es}
}
\definecolor{darkred}{rgb}{0.5,0.0,0.1}
\newcommand{\review}[2]{{\color{darkred} #1}\marginpar{\footnotesize {\color{darkred} #2}}}
\newcommand{\rev}[1]{{\color{blue} #1}}
\definecolor{darkred}{rgb}{0.0,0.0,0.0}
\newcommand{\review}[2]{#1}
\begin{document}

\ifdefined\WITHREVIEWS

\newpage
\newpage
\section*{Summary of changes}

\pagestyle{empty}

We wish to thank the reviewers for their thorough evaluation and their useful comments and suggestions.
Each comment is answered below using the code R$x$.$yy$ for comment number $yy$ from reviewer $x$.
To facilitate the reviewing process, each change is colored in red and makes a marginal reference to the reviewer's comment code R$x$.$yy$.

\subsection*{\bf Reviewer 1}
\begin{enumerate}[label=R1.\arabic*]
\item\label{rev1.1} \rev{It's puzzling that that the authors don't talk at all about the definition of the reduct due to Ferraris.  It exactly corresponds to equilibrium logic limited to formulas with one negation. Is there a way to add the second negation to Ferraris' definition to achieve exact correspondence with the new version of equilibrium logic?}

Thank you for the suggestion. We have defined now an extension of Ferraris' reduct (see Definition~\ref{def:Ferraris_reduct}) that is applicable to arbitrary theories with explicit negation and precisely captures the semantics of $\X5$ (Theorem~\ref{th:Ferraris_reduct}). 

\end{enumerate}

\subsection*{\bf Reviewer 2}
\begin{enumerate}[label=R2.\arabic*]
\item\label{rev2.1}  \rev{ Perhaps just a comment could be added about the practical relevance of the
 contribution and which are the contexts where such extended nested
 expressions could be applied.}
 See answer to Reviewer 3 below (\ref{rev3.1})
\end{enumerate}

\subsection*{\bf Reviewer 3}
\begin{enumerate}[label=R3.\arabic*]
\item\label{rev3.1} \rev{It was not clear to me what is the motivation for this work. What is the problem we try to solve here, and why is it important? How are we supposed to construct the extensions of programs in this new logic?}

As we explain in the introduction, the use of explicit negation in ASP has been a quite common feature since the very beginning of this paradigm. We have included a pair of new paragraphs in the introduction dealing with a well-known example (trains crossing) in the literature. This example illustrates the difference between default and explicit negation from a Knowledge Representation point of view. Normally, explicit negation is used as a prefix on atoms and is not treated as a real operator. This doesn't happen with conjunction, disjunction and default negation, whose nesting was generalised in~\cite{LifschitzTT99} so, to put an example, several rules sharing the same consequent $A \to B$, $A \to C$ can be abbreviated with a conjunction in the head $A \to B \wedge C$. In this paper, we share the same motivation and look for a general treatment of explicit negation so it becomes a real operator. In the new paragraph, we have also included an example of its potential use when allowing arbitrary nesting.

\end{enumerate}

\newpage
\pagestyle{plain}
\setcounter{page}{1}
\fi

\label{firstpage}

\maketitle

\begin{abstract}
A common feature in Answer Set Programming is the use of a second negation, stronger than default negation and sometimes called explicit, strong or classical negation.
This explicit negation is normally used in front of atoms, rather than allowing its use as a regular operator.
In this paper we consider the arbitrary combination of explicit negation with nested expressions, as those defined by Lifschitz, Tang and Turner.
We extend the concept of reduct for this new syntax and then prove that it can be captured by an extension of Equilibrium Logic with this second negation.
We study some properties of this variant and compare to the already known combination of Equilibrium Logic with Nelson's strong negation.
\emph{Under consideration for acceptance in TPLP.}
\end{abstract}

\begin{keywords}Answer set programming; Non-monotonic reasoning; Equilibrium logic; Explicit negation.
\end{keywords}

\section{Introduction}

Although the introduction of \emph{stable models}~\cite{GL88} in logic programming was motivated by the search of a suitable semantics for default negation, their early application to knowledge representation revealed the need of a second negation to represent explicit falsity.
This second negation was already proposed in~\cite{GelfondL91} under the name of \emph{classical negation}, an operator only applicable on atoms that, when present in the syntax, led to a change in the name of stable models to become \emph{answer sets}.
Classical negation soon became common in applications for commonsense reasoning and action theories~\cite{GL93} and was also extrapolated to the Well-Founded Semantics~\cite{Per92} under the name of \emph{explicit negation}.
Later on, it was incorporated to the paradigm of \emph{Answer Set Programming}~\cite{Nie99,MT99} (ASP), being nowadays present in the input language of most ASP solvers.

\review{
To understand the difference for knowledge representation between default negation (in this paper, written as $\neg$) and explicit negation (represented as $\sneg${}\ ), a typical example is to distinguish the rule
$\neg \I{train} \to \I{cross}$, that captures the criterion ``you can cross if you have no information on a train coming,'' from the (safier) encoding $\sneg \I{train} \to \I{cross}$ that means ``you can cross if you have evidence that no train is coming.'' In ASP, this explicit negation can only be used in front of atoms\footnote{In fact, the construct ``$\sneg \I{train}$'' is normally treated in ASP as a new atom $\I{train}'$ and an implicit constraint $\I{train} \wedge \I{train}' \to \bot$ is used to guarantee that both atoms cannot be true simultaneously.} so it is not seen as a real connective.
In an attempt of providing more flexibility to logic program connectives, \citeN{LTT99} introduced programs with \emph{nested expressions} where conjunction, disjunction and default negation could be arbitrarily nested both in the heads and bodies of rules, but classical negation was still restricted to an application on atoms.
To see an example, suppose that a given moment, three trains should be crossing, and we have an alarm that fires if one of them is known to be missing.
Using nested expressions, we can rewrite the program:
\begin{eqnarray*}
\sneg \I{train}_1 & \to & \I{alarm} \\
\sneg \I{train}_2 & \to & \I{alarm} \\
\sneg \I{train}_3 & \to & \I{alarm}
\end{eqnarray*}
as a single rule with a disjunction in the body:
\begin{eqnarray*}
\sneg \I{train}_1 \vee \sneg \I{train}_2 \vee \sneg \I{train}_3 & \to & \I{alarm}
\end{eqnarray*}
but we cannot further apply De Morgan to rewrite the rule above as:
\begin{eqnarray*}
\sneg (\I{train}_1 \wedge \I{train}_2 \wedge \I{train}_3) & \to & \I{alarm}
\end{eqnarray*}
It is easy to imagine that providing a semantics for this kind of expressions would be interesting if we plan to jump from the propositional case to programs with variables and aggregates (where, for instance, the number of trains is some arbitrary value $n \geq 0$).
}{\ref{rev2.1},\ref{rev3.1}}

An important breakthrough that meant a purely logical treatment, was the characterisation of stable models in terms of \emph{Equilibrium Logic} proposed by~\citeN{Pearce96}.
This non-monotonic formalism is defined in terms of a models selection criterion on top of the (monotonic) intermediate logic of \emph{Here-and-There} (HT)~\cite{Hey30} and captures default negation $\neg \varphi$ as a derived operator in terms of implication $\varphi \to \bot$, as usual in intuitionistic logic.
The definition of Equilibrium Logic also included a second, constructive negation `$\sneg$' corresponding to Nelson's \emph{strong negation}~\cite{Nel49} for intermediate logics.
In the case of HT, this extension yields a five-valued logic called $\N5$ where, although `$\sneg$' can now be nested as the rest of connectives, there exists a reduction for shifting it 
in front of atoms, obtaining a \emph{negative normal form} (NNF).
Once in NNF, the obtained equilibrium models actually coincide with answer sets for the syntactic fragments of nested expressions~\cite{LTT99} or for regular programs~\cite{GL93}.
For this reason, most papers on Equilibrium Logic for ASP assumed a reduction to NNF from the very beginning, and little attention was paid to the behaviour of formulas in the scope of strong negation under a logic programming perspective.
There are, however, cases in which this behaviour is not aligned with the reduct-based understanding of nested expressions in ASP.
Take, for instance, the formula:
\begin{eqnarray}
\sneg\neg p \to p \label{f:snegneg}
\end{eqnarray}
Its NNF reduction removes the combination of negations $\sneg \neg$ and produces the tautological rule \mbox{$p \to p$} whose unique equilibrium model is $\emptyset$, i.e., neither $p$ nor $\sneg p$ hold.
However, if we start instead from the formula $\sneg\neg \neg \neg p \to p$, the NNF reduction removes again the first pair of negations producing the rule $\neg \neg p \to p$ with a second answer set $\{p\}$.
This illustrates that we cannot replace $\neg p$ by $\neg \neg \neg p$ in the scope of strong negation, even though they would produce the same effect in any reduct of the style of~\cite{LTT99} for nested expressions.

In this paper, we consider a different characterisation of `$\sneg$' \ in HT and Equilibrium Logic.
We call this variant \emph{explicit negation} to differentiate it from Nelson's strong negation.
To test its adequacy, we start generalising the definition of nested expression by introducing an arbitrary nesting of `$\sneg$', adapting the definitions of reduct and answer set from~\cite{LTT99} to that context.
After that, we prove that equilibrium models (with explicit negation) capture the answer sets for these extended nested expressions and, in fact, preserve the strong equivalences from~\cite{LTT99} even for arbitrary formulas (including implication).
We also prove several properties of HT with explicit negation and provide a reduction to NNF that produces a different effect from $\N5$ when applied on implications or default negation. 

The rest of the paper is organised as follows.
In the next section, we introduce the extended definition of answer sets for programs with nested expressions, where explicit negation can be arbitrarily combined both in the rule bodies and the rule heads.
In Section~\ref{sec:eqx}, we present Equilibrium Logic with explicit negation and in particular, its new monotonic basis, $\X5$, since the selection of equilibrium models is the same one as in~\cite{Pearce96}.
Section~\ref{sec:fiveval} provides a five-valued characterisation of $\X5$ and studies different types of equivalence relations, including variants of strong equivalence.
In Section~\ref{sec:related}, we briefly explain the main differences between explicit ($\X5$) and strong ($\N5$) negations.
Finally, Section~\ref{sec:conc} concludes the paper.

\section{Nested expressions with explicit negation}
\label{sec:nested}

We begin describing the syntax of nested expressions, starting from a set of atoms $\At$.
A \emph{nested expression} $F$ is defined with the following grammar:
$$
F ::= \top \mid \bot \mid p \mid F \vee F \mid F \wedge F \mid \neg F \mid \ \sneg F
$$
where $p$ is any atom $p\in \At$.
The two negations $\neg$ and $\sneg$ \ are respectively called \emph{default} and \emph{explicit} negation (the latter is also called \emph{classical} in the ASP literature).
An \emph{explicit literal} is either an atom $p$ or its explicit negation $\sneg p$.
A \emph{default literal} is either an explicit literal $A$ or its default negation $\neg A$.
Thus, given atom $p$, we can form the default literals $p, \sneg p, \neg p$ and $\neg\!\!\sneg p$.
As we can see, the main difference with respect to~\cite{LTT99} is that, in that case, the explicit negation\footnote{To be precise, \cite{LTT99} used a different notation and names for operators: $\wedge$, $\vee$ and $\neg$ were respectively denoted as comma, semicolon and `not' in~\cite{LTT99}, whereas explicit negation $\sneg$ was denoted as $\neg$ and called \emph{classical negation}.} operator $\sneg$ \ was only used for explicit literals, whereas in this definition, it can be arbitrarily nested.
For instance, $\sneg(p \vee \neg q)$ is a nested expression under this new definition, but it is not under~\cite{LTT99}.
A \emph{rule} is an implication of the form $F \to G$ where $F$ and $G$ are nested expressions respectively called the \emph{body }ÃÂ and the \emph{head} of the rule.
A rule of the form $\top \to G$ is sometimes abbreviated as $G$ and is further called a \emph{fact} if $G$ is an explicit literal.
A \emph{logic program} is a set of rules.
We say that a nested expression, a rule or a program is \emph{explicit} if it does not contain default negation.

A program rule $F \to G$ is said to be \emph{regular} if the body $F=B_1 \wedge \dots \wedge B_n$ is a conjunction of default literals and the head $G=H_1 \vee \dots \vee H_m$ is a disjunction of default literals.
In a regular rule, we allow an empty body $n=0$ and write $F=\top$ or an empty head $m=0$ and $G=\bot$ but not both.
A program is \emph{regular} if all its rules are regular.

An \emph{interpretation} is a set of explicit literals that is consistent, that is, it does not contain both $p$ and $\sneg p$ for any atom $p$.
We define when an interpretation $T$ \emph{satisfies} (resp. \emph{falsifies}) a nested expression $F$, written $T \models F$ (resp. $T \falsif F$) providing the following recursive conditions:
\[
\begin{array}{r@{\,}c@{\,}ll@{\hspace{40pt}}r@{\,}c@{\,}ll}
T & \models & \top & & 
T & \not\falsif & \top \\
T & \not\models & \bot & & 
T & \falsif & \bot \\
T & \models & p & \mbox{if } p \in T & 
T & \falsif & p & \mbox{if } \sneg p \in T \\
T & \models & \varphi \wedge \psi & \mbox{if } T\models \varphi \mbox{ and } T \models \psi& 
T & \falsif & \varphi \wedge \psi & \mbox{if } T\falsif \varphi \mbox{ or } T \falsif \psi \\
T & \models & \varphi \vee \psi & \mbox{if } T\models \varphi \mbox{ or } T \models \psi& 
T & \falsif & \varphi \vee \psi & \mbox{if } T\falsif \varphi \mbox{ and } T \falsif \psi \\
T & \models & \sneg \varphi & \mbox{if } T \falsif \varphi & 
T & \falsif & \sneg \varphi & \mbox{if } T \models \varphi\\
T & \models & \neg \varphi & \mbox{if } T \not\models \varphi & 
T & \falsif & \neg \varphi & \mbox{if } T \models \varphi
\end{array}
\]
As an example, given $\At=\{p,q\}$ and $T=\{\sneg p\}$ we have $T \models \sneg p \vee q$ because $T \models\, \sneg p$ (i.e. $T \falsif p$) although neither $T \models q$ nor $T \falsif q$, that is, $q$ is undefined.
The latter can be expressed as $T \models \neg q \wedge \neg\!\sneg q$ (i.e., $q$ is neither true nor false).
As another example, $T \falsif p \wedge q$ because $T \falsif p$ even though, as we said, $q$ is undefined.
We say that $\varphi$ is \emph{valid} if we have $T \models \varphi$ for every interpretation~$T$.
The logic induced by these valid expressions precisely corresponds to \emph{classical logic with strong negation} as studied by~\citeN{vakarelov1977notes}.
Note that, as usual in classical logic, $\varphi \to \psi$ is definable as $\neg\varphi \vee \psi$ in this context.

Let $\Pi$ be an explicit program.
A consistent set of literals $T$ is a \emph{model} of $\Pi$ if, for every rule $F \to G$ in $\Pi$, $T \models G$ whenever $T \models F$.

\begin{definition}[reduct]
The reduct of a nested expression $F$ with respect to an interpretation $T$ is denoted as $F^T$ and defined recursively as follows:
$$
\begin{array}{rcll}
p^T & \eqdef & p & \mbox{for any atom } p \in \At \\
(F \wedge G)^T & \eqdef & F^T \wedge G^T \\
(F \vee G)^T & \eqdef & F^T \vee G^T \\
(\sneg F)^T & \eqdef & \sneg (F^T)\\
(\neg F)^T & \eqdef & \left\{
\begin{array}{rl}
\bot & \mbox{if } T \models F \\ 
\top & \mbox{otherwise}
\end{array}
\right.\\
\end{array}
$$
The \emph{reduct} of a program $\Pi$ with respect to $T$ corresponds to the explicit program:\\
$\Pi^T \eqdef \{ \ (F^T \to G^T) \mid (F \to G) \in \Pi\ \}$.\qed
\end{definition}
\begin{proposition}
\label{prop:total_model_reduct}
For any consistent set of literals $T$ and any nested formula $F$:
\begin{itemize}[ leftmargin=15pt]
\item $T \models F$ iff $T \models F^T$;
\item $T \falsif F$ iff $T \falsif F^T$.
\qed
\end{itemize}
\end{proposition}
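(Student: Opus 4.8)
The plan is to prove both biconditionals simultaneously by structural induction on the nested expression $F$. A simultaneous induction is essential here: the clause for explicit negation states that $T \models \sneg\varphi$ iff $T \falsif \varphi$ and $T \falsif \sneg\varphi$ iff $T \models \varphi$, so the satisfaction statement for $\sneg\varphi$ depends on the \emph{falsification} statement for $\varphi$ and vice versa. Trying to prove the two halves separately would therefore stall, whereas carrying the joint hypothesis ``$T\models\varphi \Leftrightarrow T\models\varphi^T$ \emph{and} $T\falsif\varphi \Leftrightarrow T\falsif\varphi^T$'' through the induction lets the two statements feed each other.

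For the base cases I would use that $\top^T=\top$, $\bot^T=\bot$ and $p^T=p$, so the claims hold trivially from the semantic clauses for the constants and atoms. For the binary connectives the argument is routine unfolding: for $F = \varphi \wedge \psi$, note $(\varphi\wedge\psi)^T \eqdef \varphi^T \wedge \psi^T$, so
$$T \models \varphi\wedge\psi \ \Leftrightarrow\ T\models\varphi \text{ and } T\models\psi \ \Leftrightarrow\ T\models\varphi^T \text{ and } T\models\psi^T \ \Leftrightarrow\ T\models(\varphi\wedge\psi)^T,$$
using the induction hypothesis in the middle step, and dually for falsification using $T\falsif\varphi\wedge\psi$ iff $T\falsif\varphi$ or $T\falsif\psi$; the case $F=\varphi\vee\psi$ is symmetric. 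For the explicit negation case $F = \sneg\varphi$, I would use $(\sneg\varphi)^T \eqdef \sneg(\varphi^T)$ together with the two semantic clauses for $\sneg$: $T\models\sneg\varphi$ iff $T\falsif\varphi$ iff (induction hypothesis, falsification half) $T\falsif\varphi^T$ iff $T\models\sneg(\varphi^T)$, and analogously $T\falsif\sneg\varphi$ iff $T\models\varphi$ iff $T\models\varphi^T$ iff $T\falsif\sneg(\varphi^T)$. This is exactly where the mutual dependence between the two halves is consumed.

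The one genuinely different case, and the step I expect to require the most care, is default negation $F=\neg\varphi$, because its reduct is defined by the case split $(\neg\varphi)^T \eqdef \bot$ if $T\models\varphi$ and $(\neg\varphi)^T \eqdef \top$ otherwise. Here I would argue directly by case analysis on whether $T\models\varphi$, and the point worth highlighting is that the induction hypothesis is \emph{not} needed: both the semantics of $\neg\varphi$ and the value of its reduct depend only on whether $T\models\varphi$, not on $\varphi^T$. If $T\models\varphi$, then $T\not\models\neg\varphi$ and $(\neg\varphi)^T=\bot$ with $T\not\models\bot$, so the satisfaction biconditional holds; also $T\falsif\neg\varphi$ holds and $T\falsif\bot$ holds, so the falsification biconditional holds. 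If $T\not\models\varphi$, then $T\models\neg\varphi$ and $(\neg\varphi)^T=\top$ with $T\models\top$; and $T\not\falsif\neg\varphi$ (since falsification of $\neg\varphi$ requires $T\models\varphi$) while $T\not\falsif\top$. Both biconditionals again agree. The extension to programs then follows immediately, since $\Pi^T$ is obtained rule by rule and models are defined through satisfaction of bodies and heads.
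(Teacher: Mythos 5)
Your proof is correct and follows the same route the paper's argument takes: a simultaneous structural induction on $F$ proving the satisfaction and falsification claims together, which is forced by the mutual recursion in the clauses for $\sneg$, with the default-negation case handled by direct case analysis on $T \models \varphi$ without invoking the induction hypothesis. The only cosmetic point is that you supply the (implicit) clauses $\top^T = \top$ and $\bot^T = \bot$, which the paper's reduct definition omits but clearly intends.
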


\begin{definition}[answer set]
A consistent set of literals $T$ is an \emph{answer set} of a program $\Pi$ if it is a $\subseteq$-minimal model of the reduct $\Pi^T$.\qed
\end{definition}

Notice that the definitions of reduct and answer set for the case of regular programs directly coincide with the standard definitions in ASP without nested expressions~\cite{GelfondL91}.
They also coincide with~\cite{LTT99}, defined on the case of programs with nested expressions where `$\sneg$' \ is only in front of atoms.

\begin{example}\label{ex:nonot}
Take the program consisting of the single rule \eqref{f:snegneg}. 
For $\At=\{p\}$, we have three possible interpretations $T_1=\{p\}$, $T_2=\{\sneg p\}$ and $T_3=\emptyset$.
This yields two possible reducts $\Pi^{T_1}=\{\sneg \bot \to p\}$ and $\Pi^{T_2}=\Pi^{T_3}=\{\sneg \top \to p\}$.
It is easy to see that their corresponding minimal models are $T_1$ and $T_3$ which constitute the two answer sets of $\Pi$. \qed
\end{example}

\begin{example}\label{ex:bird}
Take the program consisting of the single rule:
\begin{eqnarray}
\neg (\I{bird} \wedge \sneg \I{flies}) \to \ \sneg (\I{bird} \wedge \sneg \I{flies}) \label{f:bird}
\end{eqnarray}
capturing the idea that ``being a bird that does not fly'' should be false by default.
If we choose any interpretation $T$ such that $T \models \I{bird} \wedge \sneg \I{flies}$ then the reduct will have a single rule with $\bot$ in the body and the minimal model will be $\emptyset$ which does not  satisfy $\I{bird} \wedge \sneg \I{flies}$.
If $T \not\models \I{bird} \wedge \sneg \I{flies}$ instead, the reduct becomes $\top \to \ \sneg (\I{bird} \wedge \sneg \I{flies})$ and the minimal models of this program are $\{\sneg \I{bird}\}$ and $\{\I{flies}\}$ that, as they are both compatible with the assumption for $T$, they become the two answer sets of \eqref{f:bird}.

Suppose we extend now \eqref{f:bird} with the fact $bird$.
Doing so, it is easy to see that the only answer set becomes $\{\I{flies}\}$.
Analogously, if we take  \eqref{f:bird} plus the fact $\sneg \I{flies}$ the only answer set becomes $\{\sneg \I{bird}\}$.
Finally, if we add the facts $\I{bird}$ and $\sneg \I{flies}$ to \eqref{f:bird}, the default is deactivated and we get the unique answer set $\{\I{bird},\sneg \I{flies}\}$. \qed
\end{example}

\section{Equilibrium logic with explicit negation}
\label{sec:eqx}

We start defining the monotonic logic of \emph{Here-and-There with explicit negation}, $\X5$.
Let $\At$ be a set of atoms.
A \emph{formula} $\varphi$ is an expression built with the grammar:
$$
\varphi ::= p \mid \bot \mid \varphi \wedge \varphi \mid \varphi \vee \varphi \mid \varphi \to \varphi \mid \ \sneg \varphi
$$
for any atom $p\in \At$.
We also use the abbreviations:
\setlength{\multicolsep}{-10pt}
\begin{multicols}{2}
\begin{eqnarray*}
\neg \varphi & \eqdef & (\varphi \to \bot)\\
\top & \eqdef & \neg \bot\\
\end{eqnarray*}
\begin{eqnarray*}
\\
\varphi \leftrightarrow \psi & \eqdef & (\varphi \to \psi) \wedge (\psi \to \varphi)\\
\varphi \Leftrightarrow \psi & \eqdef & (\varphi \leftrightarrow \psi) \wedge (\sneg \varphi \leftrightarrow \sneg \psi)
\end{eqnarray*}
\end{multicols}
\noindent
Given a pair of formulas $\varphi$ and $\alpha$, we write $\varphi[\alpha/p]$ to denote the uniform substitution of all occurrences of atom $p$ in $\varphi$ by $\alpha$.
As usual, a \emph{theory} is a set of formulas.
We sometimes understand finite theories (or subtheories) as the conjunction of their formulas.
Notice that programs with nested expressions are also theories under this definition.

An $\X5$-\emph{interpretation} is a pair $\tuple{H,T}$ of consistent sets of explicit literals (respectively standing for ``here'' and ``there'') satisfying $H \subseteq T$.
We say that the interpretation is \emph{total} when $H=T$.
\begin{definition}[$\X5$ Satisfaction/falsification]\label{def:satfals}
We say that $\tuple{H,T}$ \emph{satisfies} (resp. \emph{falsifies}) a formula $\varphi$, written $\tuple{H,T} \models \varphi$ (resp. $\tuple{H,T} \falsif \varphi$), when the following recursive conditions hold:
\[
\begin{array}{r@{\,}c@{\,}l@{\;}l@{\hspace{10pt}}r@{\,}c@{\,}l@{\;}l}
\tuple{H,T} & \models & \top & & 
\tuple{H,T} & \not\falsif  & \top \\
\tuple{H,T} & \not\models  & \bot & & 
\tuple{H,T} & \falsif & \bot \\
\tuple{H,T} & \models & p & \mbox{if } p \in H & 
\tuple{H,T} & \falsif & p & \mbox{if } \sneg p \in H \\
\tuple{H,T} & \models & \varphi \wedge \psi & \mbox{if } \tuple{H,T} \models  \varphi \mbox{ and } \tuple{H,T}  \models  \psi& 
\tuple{H,T} & \falsif & \varphi \wedge \psi & \mbox{if } \tuple{H,T}  \falsif  \varphi \mbox{ or } \tuple{H,T}  \falsif  \psi \\
\tuple{H,T} & \models & \varphi \vee \psi & \mbox{if } \tuple{H,T} \models  \varphi \mbox{ or } \tuple{H,T}  \models  \psi& 
\tuple{H,T} & \falsif & \varphi \vee \psi & \mbox{if } \tuple{H,T}  \falsif  \varphi \mbox{ and } \tuple{H,T}  \falsif  \psi \\
\tuple{H,T} & \models & \sneg \varphi & \mbox{if } \tuple{H,T}  \falsif  \varphi & 
\tuple{H,T} & \falsif & \sneg \varphi & \mbox{if } \tuple{H,T}  \models  \varphi\\
\tuple{H,T} & \models & \varphi\! \to \! \psi & \mbox{if both} & 
\tuple{H,T} & \falsif & \varphi \! \to \! \psi & \mbox{if } \tuple{T,T}  \models  \varphi \mbox{ and } \tuple{H,T} \falsif \psi\\
& & & (i) \tuple{H,T} \not\models   \varphi \mbox{ or } \tuple{H,T} \models \psi \\
& & & (ii) \tuple{T,T} \not\models   \varphi \mbox{ or } \tuple{T,T} \models \psi
& & & & \hfill\Box
\end{array}
\]
\end{definition}
A formula $\varphi$ is a \emph{tautology} (or is \emph{valid}), written $\models \varphi$, if it is satisfied by every possible interpretation.
We say that an $\X5$-interpretation $\tuple{H,T}$ is a \emph{model} of a theory $\Gamma$, written $\tuple{H,T} \models \Gamma$, if $\tuple{H,T}\models \varphi$ for all $\varphi \in \Gamma$.
The next observation about Definition~\ref{def:satfals} connects satisfaction `$\models$' with standard HT.
\begin{observation}\label{obs:ht}
 The satisfaction relation `$\models$' (left column in Def.~\ref{def:satfals}) of any formula corresponds to regular HT satisfaction up to the first occurrence of `$\sim$', where the falsification `$\falsif$' comes into play.\qed
\end{observation}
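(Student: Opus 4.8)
The plan is to prove the observation by structural induction on the formula, after first pinning down its informal content. I would read ``up to the first occurrence of `$\sim$'\,'' as follows: regard every \emph{maximal} subformula of the form $\sneg\psi$ (one not lying inside another `$\sim$') as an indivisible unit, so that an arbitrary $\X5$-formula $\varphi$ is generated by the standard HT connectives $\wedge,\vee,\to$ (with $\bot$, and $\top,\neg$ as abbreviations) out of \emph{generalized atoms}, each of which is either a genuine atom $p\in\At$ or such a unit $\sneg\psi$. The claim then splits into two parts: (i) on the fragment reached without meeting any `$\sim$' — i.e. the formulas built from $\wedge,\vee,\to,\bot$ and atoms — the left-column clauses of Definition~\ref{def:satfals} are literally the clauses of ordinary HT satisfaction; and (ii) the unique place where the definition departs from HT is the clause for $\sneg\varphi$, which hands control over to the falsification relation `$\falsif$'.

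For part (i) I would simply line up the two sets of clauses. Recall that in regular HT an interpretation is a pair $\tuple{H,T}$ with $H\subseteq T$, and satisfaction is given by $\tuple{H,T}\models p$ iff $p\in H$, the usual pointwise rules for $\wedge$ and $\vee$, the condition $\tuple{H,T}\not\models\bot$, and the implication rule: $\tuple{H,T}\models\varphi\to\psi$ iff both ($\tuple{H,T}\not\models\varphi$ or $\tuple{H,T}\models\psi$) and ($\tuple{T,T}\not\models\varphi$ or $\tuple{T,T}\models\psi$). Each of these coincides symbol for symbol with the corresponding row in the left column of Definition~\ref{def:satfals}. The only extra care needed is to note that, although $H$ and $T$ here are consistent sets of explicit literals rather than plain atoms, a `$\sim$'-free formula inspects these sets only through the test $p\in H$; hence the $\sim$-literals possibly present in $H,T$ are irrelevant to its satisfaction, and the restriction of `$\models$' to such formulas is exactly HT over $\At$ with the atom-parts of $H$ and $T$.

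The induction is then immediate. The base cases $p,\bot,\top$ are covered by part (i); for $\varphi\wedge\psi$, $\varphi\vee\psi$ and $\varphi\to\psi$ the left-column clause reduces $\tuple{H,T}\models$ of the compound to satisfaction of strictly smaller subformulas under precisely the HT recurrence, and these subformulas are reached without crossing any `$\sim$', so the induction hypothesis applies; and for $\sneg\varphi$ the clause is exactly $\tuple{H,T}\models\sneg\varphi$ iff $\tuple{H,T}\falsif\varphi$, which is the announced switch to the falsification column. Tracing any evaluation of $\tuple{H,T}\models\varphi$ top-down therefore follows the HT rules verbatim until the first `$\sim$' node is met, at which point `$\falsif$' takes over, as claimed.

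The only genuine difficulty is expository rather than mathematical: since the statement is informal, the work lies in fixing the right precise formulation — the generalized-atom reading above — so that the comparison with HT is faithful; once that is settled, everything reduces to a clause-by-clause inspection and a routine structural induction with no nontrivial case analysis. In particular the implication clause, despite referring to both $\tuple{H,T}$ and $\tuple{T,T}$, uses the same `$\models$' relation in both conjuncts, so the HT correspondence is preserved without complication.
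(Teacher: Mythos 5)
Your proposal is correct and matches the paper's intent: the paper states this as an observation with no written proof, taking it to be evident from a clause-by-clause comparison of the left column of Definition~\ref{def:satfals} with the standard HT clauses, which is exactly the inspection (plus routine induction) you carry out. The extra care you take in formalising ``up to the first occurrence of `$\sim$'\,'' via maximal $\sneg$-subformulas as generalized atoms, and in noting that $\sim$-free formulas only see the atom-part of $H$ and $T$, is sound and adds nothing beyond what the paper implicitly assumes.
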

As a result, any tautology from HT can be shifted to $\X5$, even if its atoms are uniformly replaced by subformulas containing explicit negation.
\begin{theorem}\label{th:httaut}
If formula $\varphi$ is HT valid (and so, it does not contain $\sneg$~) then $\varphi[\alpha/p]$ is also $\X5$ valid, for any formula $\alpha$ and any atom $p$.\qed
\end{theorem}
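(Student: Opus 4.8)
The plan is to reduce $\X5$ validity of $\varphi[\alpha/p]$ to ordinary HT validity of $\varphi$ through a substitution lemma, exploiting Observation~\ref{obs:ht}. Since $\varphi$ contains no occurrence of `$\sneg$', evaluating $\tuple{H,T} \models \varphi[\alpha/p]$ never triggers the falsification relation `$\falsif$' at the level of $\varphi$'s own connectives: the only way `$\falsif$' is invoked is inside the substituted copies of $\alpha$. By Observation~\ref{obs:ht}, the outer satisfaction therefore behaves exactly like HT satisfaction, with each occurrence of $\alpha$ acting as an opaque propositional unit. Fixing an arbitrary $\X5$-interpretation $\tuple{H,T}$, the idea is to collapse $\alpha$ to a truth value and read $\varphi$ off as a genuine HT formula.

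Concretely, I would associate with $\tuple{H,T}$ the HT-interpretation $\tuple{H^*,T^*}$ over the atoms of $\varphi$ defined by $q \in H^*$ iff $q \in H$ and $q \in T^*$ iff $q \in T$ for every atom $q \neq p$, together with $p \in H^*$ iff $\tuple{H,T} \models \alpha$ and $p \in T^*$ iff $\tuple{T,T} \models \alpha$. The core claim is the substitution lemma: for every HT formula $\psi$ over the atoms of $\varphi$, we have that $\tuple{H,T} \models \psi[\alpha/p]$ holds in $\X5$ iff $\tuple{H^*,T^*} \models \psi$ holds in HT. This is proved by structural induction on $\psi$. The atomic case $\psi = p$ is exactly the definition of the here-component of $H^*$; the atomic cases $\psi = q \neq p$ and $\psi = \bot$ are immediate; and the conjunction and disjunction cases follow directly from the inductive hypothesis, since the $\X5$ and HT clauses for `$\models$' coincide on $\wedge$ and $\vee$.

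The implication case $\psi = \psi_1 \to \psi_2$ is where the argument must be set up carefully, and it is the main obstacle. The $\X5$ clause for `$\models$' on $\to$ imposes two conditions, one evaluated at $\tuple{H,T}$ and one at the total interpretation $\tuple{T,T}$, mirroring exactly the two HT conditions at the ``here'' and ``there'' worlds. Condition $(i)$ is handled by the inductive hypothesis applied to $\tuple{H,T}$; condition $(ii)$ requires applying the inductive hypothesis to $\tuple{T,T}$, whose companion HT-interpretation is precisely $\tuple{T^*,T^*}$, so the two evaluations fit together. For this scheme to make sense, however, $\tuple{H^*,T^*}$ must be a legitimate HT-interpretation, i.e.\ $H^* \subseteq T^*$; the only non-trivial inclusion is for $p$, which amounts to $\tuple{H,T} \models \alpha \Rightarrow \tuple{T,T} \models \alpha$. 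Thus the proof hinges on a \emph{persistence} property of $\X5$: both satisfaction and falsification are preserved when passing from $\tuple{H,T}$ to $\tuple{T,T}$. I would establish this first as a preliminary lemma, by a simultaneous induction on `$\models$' and `$\falsif$' over the structure of $\alpha$ (the `$\sneg$' case is what forces the two statements to be proved together). With the substitution lemma in hand, HT validity of $\varphi$ gives $\tuple{H^*,T^*} \models \varphi$ for the particular companion interpretation, hence $\tuple{H,T} \models \varphi[\alpha/p]$; as $\tuple{H,T}$ was arbitrary, $\varphi[\alpha/p]$ is $\X5$ valid.
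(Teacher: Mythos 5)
Your proof is correct and follows essentially the route the paper intends: Observation~\ref{obs:ht} reduces the outer evaluation to plain HT with the substituted copies of $\alpha$ treated as opaque units, formalised by your companion interpretation $\tuple{H^*,T^*}$ and substitution lemma. You also correctly identify that persistence of $\alpha$ (Theorem~\ref{th:persistence}, proved independently by simultaneous induction on `$\models$' and `$\falsif$', so no circularity) is needed to guarantee $H^* \subseteq T^*$.
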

If we choose any $p$ not occurring in $\varphi$, then $\varphi[\alpha/p]=\varphi$ and the theorem above is just saying that $\X5$ is a conservative extension of HT.
But it can also be exploited further by replacing, in the HT tautology, any atom by an arbitrary formula containing negation.
For instance, if explicit negation only occurs in front of atoms, we essentially get HT with explicit literals playing the role of atoms (disregarding inconsistent models).
However, when we combine explicit negation in an arbitrary way, some usual properties of HT need to be checked in the new context.

\begin{lemma}
\label{lem:satisfaction_total_models}
Let $T$ be a consistent set of literals and $F$ a nested expression. Then:
\begin{itemize}[ leftmargin=15pt]
\item $\tuple{T,T} \models F$ iff $T \models F$;
\item $\tuple{T,T} \falsif F$ iff $T \falsif F$. \qed
\end{itemize}
\end{lemma}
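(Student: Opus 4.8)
The plan is to prove both bullet points simultaneously by structural induction on the nested expression $F$, regarding $F$ as an $\X5$ formula under the standard abbreviations $\top \eqdef \neg\bot$ and $\neg G \eqdef (G \to \bot)$. A simultaneous induction is forced on us by the clause for explicit negation: satisfaction of $\sneg G$ is defined in terms of falsification of $G$ (and conversely), so neither equivalence can be established in isolation. At each step the induction hypothesis therefore supplies both equivalences for the immediate subformulas.

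For the base cases $\top$, $\bot$ and an atom $p$, the claim is immediate by setting $H=T$ in Definition~\ref{def:satfals} and comparing with the two-valued relation of Section~\ref{sec:nested}: the clauses for the constants coincide verbatim, and for an atom we have $\tuple{T,T}\models p$ iff $p \in T$ iff $T \models p$, while $\tuple{T,T}\falsif p$ iff $\sneg p \in T$ iff $T \falsif p$. For the connectives $\wedge$, $\vee$ and $\sneg$, the recursive clauses of Definition~\ref{def:satfals} become syntactically identical to those of Section~\ref{sec:nested} once $H=T$ is fixed, so each case closes by applying the induction hypothesis to the immediate subexpressions. The only point to record is that in the $\sneg$ case the hypothesis is invoked for the \emph{other} relation, e.g.\ $\tuple{T,T}\models\sneg G$ iff $\tuple{T,T}\falsif G$ iff (by the hypothesis) $T \falsif G$ iff $T \models \sneg G$, and symmetrically for falsification.

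The only case requiring genuine attention is default negation $\neg F$, which in $\X5$ is the abbreviation $F \to \bot$ and must therefore be treated through the implication clauses of Definition~\ref{def:satfals}. Substituting $H=T$ and $\psi = \bot$, and using that $\tuple{T,T}\not\models\bot$ while $\tuple{T,T}\falsif\bot$, the two conjuncts $(i)$ and $(ii)$ of the satisfaction clause coincide (since $H=T$) and collapse to the single condition $\tuple{T,T}\not\models F$; likewise the falsification clause reduces to $\tuple{T,T}\models F$. Applying the induction hypothesis, these are equivalent to $T \not\models F$ and $T \models F$ respectively, which are exactly the Section~\ref{sec:nested} conditions defining $T \models \neg F$ and $T \falsif \neg F$. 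This closes the induction.

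I expect no real obstacle: the content of the lemma is essentially that, on total interpretations $\tuple{T,T}$, the three-valued $\X5$ relations degenerate to the two-valued relations used to define the reduct. The one place where care is needed is the expansion $\neg F = F \to \bot$, together with the verification that the persistence conditions $(i)$–$(ii)$ of the implication clause become redundant precisely because $H=T$; everything else is routine bookkeeping.
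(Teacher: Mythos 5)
Your proof is correct and is the standard argument for this lemma: a simultaneous structural induction on the nested expression, with the only non-trivial case being $\neg F$ unfolded as $F \to \bot$, where the two persistence conjuncts of the implication clause collapse under $H=T$. This matches the approach the paper takes; no gaps.
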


\begin{theorem}[Persistence]
\label{th:persistence}
For any $\X5$-interpretation $\tuple{H,T}$ and any formula $\varphi$ then both:
\begin{itemize}[ leftmargin=15pt]
\item[(i)] $\tuple{H,T} \models \varphi$ implies $\tuple{T,T} \models \varphi$; \item[(ii)] $\tuple{H,T} \falsif \varphi$ implies $\tuple{T,T} \falsif \varphi$.\qed
\end{itemize}
\end{theorem}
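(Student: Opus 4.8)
The plan is to prove both statements (i) and (ii) simultaneously by structural induction on the formula $\varphi$, since satisfaction and falsification are defined by mutual recursion and the clause for one negation flips into the other. The base cases ($\top$, $\bot$, and atoms $p$) are immediate: for an atom, $\tuple{H,T}\models p$ means $p\in H$, and since $H\subseteq T$ we get $p\in T$, hence $\tuple{T,T}\models p$; falsification of $p$ (namely $\sneg p \in H$, so $\sneg p\in T$) is symmetric. The constants are trivial since their (non-)satisfaction is unconditional.

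For the inductive step I would handle the connectives in the order $\wedge$, $\vee$, $\sneg$, $\to$. The cases for $\wedge$ and $\vee$ are routine: using the recursive conditions, $\tuple{H,T}\models\varphi\wedge\psi$ gives satisfaction of both conjuncts at $\tuple{H,T}$, the induction hypothesis lifts each to $\tuple{T,T}$, and reassembling yields $\tuple{T,T}\models\varphi\wedge\psi$; falsification and the dual $\vee$ cases work the same way. The crucial clause is explicit negation: here persistence of satisfaction reduces to persistence of \emph{falsification} of the operand, and vice versa, which is exactly why (i) and (ii) must be proved together in a single induction rather than separately. For $\sneg\varphi$, satisfaction $\tuple{H,T}\models\sneg\varphi$ means $\tuple{H,T}\falsif\varphi$; part (ii) of the induction hypothesis gives $\tuple{T,T}\falsif\varphi$, hence $\tuple{T,T}\models\sneg\varphi$; the falsification direction is handled symmetrically via part (i).

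The implication case is the main obstacle, but the definition is designed to make it smooth. For $\varphi\to\psi$, the satisfaction clause already has a ``there'' conjunct built in: condition $(ii)$ of $\tuple{H,T}\models\varphi\to\psi$ states precisely that $\tuple{T,T}\not\models\varphi$ or $\tuple{T,T}\models\psi$, which is exactly $\tuple{T,T}\models\varphi\to\psi$ (note that $\tuple{T,T}\models\varphi\to\psi$ unfolds, using $H=T$, so that conditions $(i)$ and $(ii)$ collapse to this single disjunction). So persistence of satisfaction for implication is immediate from the definition itself, requiring no induction hypothesis at all. For falsification, $\tuple{H,T}\falsif\varphi\to\psi$ means $\tuple{T,T}\models\varphi$ and $\tuple{H,T}\falsif\psi$; applying part (ii) of the induction hypothesis to $\psi$ gives $\tuple{T,T}\falsif\psi$, and combined with $\tuple{T,T}\models\varphi$ this yields $\tuple{T,T}\falsif\varphi\to\psi$ directly by the falsification clause.

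I would then note that default negation $\neg\varphi$, being the abbreviation $\varphi\to\bot$, is covered automatically by the implication case, so no separate argument is needed. The only subtlety worth double-checking is that every appeal to the induction hypothesis cites the \emph{correct} part: satisfaction clauses of $\sneg$ feed on falsification (part (ii)) and falsification clauses of $\sneg$ feed on satisfaction (part (i)), so the two statements genuinely support each other. Once the explicit-negation and implication clauses are verified, the remaining work is purely mechanical bookkeeping over the recursive definitions.
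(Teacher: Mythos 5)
Your proof is correct and follows the standard route: a simultaneous structural induction on satisfaction and falsification, with the $\sneg$ case cross-feeding between parts (i) and (ii), the satisfaction of $\to$ handled for free by the built-in ``there'' condition $(ii)$ of Definition~\ref{def:satfals}, and the falsification of $\to$ closed by the induction hypothesis on $\psi$ together with the $\tuple{T,T}\models\varphi$ conjunct already present in the clause. This matches the intended argument for Theorem~\ref{th:persistence}; no gaps.
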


\begin{proposition}
\label{prop:default_negation}
For any $\X5$-interpretation $\tuple{H,T}$, any formula $\varphi$:
\begin{itemize}[ leftmargin=15pt]
\item $\tuple{H,T} \models \neg \varphi$ iff $\tuple{T,T} \not\models \varphi$;
\item $\tuple{H,T} \falsif \neg \varphi$ iff $\tuple{T,T} \models \varphi$.
\qed
\end{itemize}
\end{proposition}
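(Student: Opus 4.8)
The plan is to unfold the abbreviation $\neg\varphi \eqdef (\varphi \to \bot)$ and apply the implication clauses of Definition~\ref{def:satfals} directly, specialising the consequent to $\bot$. Both equivalences should follow from routine simplification plus one appeal to persistence, so I do not expect any serious obstacle; the only point requiring thought is spotting that one of the two conjuncts in the satisfaction clause is redundant.

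For the satisfaction equivalence, I would instantiate the implication-satisfaction clause with $\psi = \bot$. This gives $\tuple{H,T} \models \neg\varphi$ iff both (i) $\tuple{H,T} \not\models \varphi$ or $\tuple{H,T} \models \bot$, and (ii) $\tuple{T,T} \not\models \varphi$ or $\tuple{T,T} \models \bot$ hold. Since Definition~\ref{def:satfals} stipulates $\tuple{H,T} \not\models \bot$ for every interpretation, the disjuncts $\tuple{H,T} \models \bot$ and $\tuple{T,T} \models \bot$ are both false, so the condition reduces to the conjunction of $\tuple{H,T} \not\models \varphi$ and $\tuple{T,T} \not\models \varphi$. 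The key observation is then that this conjunction collapses: by the contrapositive of Theorem~\ref{th:persistence}(i), $\tuple{T,T} \not\models \varphi$ already implies $\tuple{H,T} \not\models \varphi$. Hence the two conjuncts are equivalent to the single condition $\tuple{T,T} \not\models \varphi$, which is exactly the claim.

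For the falsification equivalence, I would likewise instantiate the implication-falsification clause with $\psi = \bot$, obtaining that $\tuple{H,T} \falsif \neg\varphi$ iff $\tuple{T,T} \models \varphi$ and $\tuple{H,T} \falsif \bot$. Because Definition~\ref{def:satfals} also stipulates $\tuple{H,T} \falsif \bot$ for every interpretation, the second conjunct is vacuously true, and the condition reduces to $\tuple{T,T} \models \varphi$, as required. No further machinery is needed here; persistence is invoked only in the satisfaction case.

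In summary, the whole argument is a direct unfolding of the abbreviation and the implication clauses, exploiting the fixed truth values of $\bot$ under both `$\models$' and `$\falsif$'. The single nontrivial step is the use of persistence to eliminate the redundant $\tuple{H,T} \not\models \varphi$ conjunct, and I would flag that explicitly as the place where the semantics of $\X5$ (rather than pure propositional bookkeeping) actually enters.
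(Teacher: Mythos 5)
Your proof is correct and follows the same route the paper intends: unfolding $\neg\varphi$ as $\varphi\to\bot$, using the fixed values of $\bot$ under `$\models$' and `$\falsif$', and invoking persistence (Theorem~\ref{th:persistence}) to collapse the two conjuncts of the satisfaction clause to the single condition $\tuple{T,T}\not\models\varphi$. Note also that persistence is stated before this proposition in the paper, so there is no circularity in appealing to it.
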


The following results establish a connection between $\X5$ and the reduct of a nested expression or a program.

\begin{lemma}
\label{lem:aux_reduct}
Let $\tuple{H,T}$ be an $\X5$-interpretation and $F$ a nested expression. Then:
\begin{itemize}[ leftmargin=15pt]
\item $\tuple{H,T} \models F$ iff $H \models F^T$;
\item $\tuple{H,T} \falsif F$ iff $H \falsif F^T$. \qed
\end{itemize}
\end{lemma}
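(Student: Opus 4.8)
The plan is to prove the two biconditionals \emph{simultaneously} by structural induction on the nested expression $F$. Treating satisfaction and falsification together is essential, because the clause for explicit negation swaps the two roles: establishing $\tuple{H,T} \models \sneg G$ will require the \emph{falsification} half of the induction hypothesis for $G$, and dually.

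The base cases are immediate. For an atom, $p^T = p$, and by Definition~\ref{def:satfals} we have $\tuple{H,T} \models p$ iff $p \in H$ iff $H \models p$, while $\tuple{H,T} \falsif p$ iff $\sneg p \in H$ iff $H \falsif p$; the constants $\top$ and $\bot$ behave identically on both sides. For the binary connectives the two sides share the same recursive shape, since $(G_1 \wedge G_2)^T = G_1^T \wedge G_2^T$ and $(G_1 \vee G_2)^T = G_1^T \vee G_2^T$, so each biconditional follows by unfolding the matching clause and applying the induction hypothesis to $G_1$ and $G_2$ (the satisfaction halves feed the $\models$ biconditional, the falsification halves feed the $\falsif$ one). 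For explicit negation $F = \sneg G$ I would use $(\sneg G)^T = \sneg(G^T)$ together with the clauses $\tuple{H,T} \models \sneg G$ iff $\tuple{H,T} \falsif G$ and $H \models \sneg(G^T)$ iff $H \falsif G^T$: the falsification half of the hypothesis for $G$ closes the satisfaction half for $\sneg G$, and symmetrically the satisfaction half closes the falsification half.

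The crux is the default-negation case $F = \neg G$, the only point where the ``there'' component $T$ genuinely enters. By the reduct definition, $(\neg G)^T$ equals $\bot$ when $T \models G$ and $\top$ otherwise; hence on the right-hand side $H \models (\neg G)^T$ holds iff $T \not\models G$, and $H \falsif (\neg G)^T$ holds iff $T \models G$, in either case independently of $H$. On the left-hand side I would invoke Proposition~\ref{prop:default_negation} to rewrite $\tuple{H,T} \models \neg G$ as $\tuple{T,T} \not\models G$ and $\tuple{H,T} \falsif \neg G$ as $\tuple{T,T} \models G$, and then Lemma~\ref{lem:satisfaction_total_models} to replace $\tuple{T,T} \models G$ by the single-interpretation relation $T \models G$. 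Matching the two sides then yields both biconditionals at once.

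The main obstacle is therefore conceptual rather than computational: it lies in recognising that the $\neg G$ step must be routed through exactly these two earlier results, bridging the $\X5$ evaluation at $\tuple{T,T}$ and the plain relation $T \models G$ on which the reduct is defined. Everything else is a direct unfolding of the definitions under the induction hypothesis. I note finally that inside a nested expression the subformula $G$ again ranges over nested expressions with no implication, so Proposition~\ref{prop:default_negation} and Lemma~\ref{lem:satisfaction_total_models} apply without any further caveat.
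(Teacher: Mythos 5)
Your proposal is correct and follows what is essentially the paper's route: a simultaneous structural induction on $F$ proving the $\models$ and $\falsif$ biconditionals together (which is indeed forced by the clause for $\sneg$), with the default-negation case discharged via Proposition~\ref{prop:default_negation} and Lemma~\ref{lem:satisfaction_total_models}, exactly the two results the paper places immediately before this lemma for that purpose. The only cosmetic remark is that the reduct in Definition~1 does not explicitly list $\top^T$ and $\bot^T$, so you are (reasonably) reading them as fixed points; otherwise nothing is missing.
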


\begin{corollary}
\label{cor:equivalence_for_total_model}
For any consistent set of literals $T$ and any program $\Pi$:
$\tuple{T,T} \models \Pi$ iff $T \models \Pi$.\qed
\end{corollary}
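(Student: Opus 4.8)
The plan is to argue rule by rule, exploiting that a program is understood as the conjunction of its rules. Since $\tuple{T,T} \models \Pi$ means $\tuple{T,T} \models (F \to G)$ for every rule $F \to G \in \Pi$, and $T \models \Pi$ means that $T \models G$ holds whenever $T \models F$ for every such rule, it suffices to establish, for a single rule $F \to G$ with $F$ and $G$ nested expressions, the equivalence between $\tuple{T,T} \models (F \to G)$ in $\X5$ and the condition ``$T \models F$ implies $T \models G$'' from Section~\ref{sec:nested}.

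First I would unfold the $\X5$ satisfaction clause for implication at the \emph{total} interpretation $\tuple{T,T}$. Because here and there coincide, the two side conditions $(i)$ and $(ii)$ of Definition~\ref{def:satfals} collapse into the single requirement ``$\tuple{T,T} \not\models F$ or $\tuple{T,T} \models G$''. Hence $\tuple{T,T} \models (F \to G)$ is equivalent to ``$\tuple{T,T} \models F$ implies $\tuple{T,T} \models G$''. Next I would invoke Lemma~\ref{lem:satisfaction_total_models} on the nested expressions $F$ and $G$ separately; note that, being a body and a head, they contain no implication, so the lemma applies directly and yields $\tuple{T,T} \models F$ iff $T \models F$, and likewise for $G$. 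Substituting these equivalences rewrites the condition as ``$T \models F$ implies $T \models G$'', which is precisely the defining condition for $T$ to satisfy the rule. Quantifying over all rules of $\Pi$ then delivers $\tuple{T,T} \models \Pi$ iff $T \models \Pi$.

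I do not expect a genuine obstacle here: the statement is essentially a bookkeeping consequence of the collapse of the implication clause at total interpretations combined with Lemma~\ref{lem:satisfaction_total_models}. The only point requiring a little care is the reading of $T \models \Pi$ when $\Pi$ is not necessarily explicit, since the single-set relation of Section~\ref{sec:nested} is defined on nested expressions rather than on implications. I would therefore fix at the outset that $T \models (F \to G)$ abbreviates ``$T \models F$ implies $T \models G$'', consistent with the notion of model given for explicit programs, and confirm that this is the intended meaning of the right-hand side of the corollary.
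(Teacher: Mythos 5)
Your proof is correct and follows essentially the route the paper intends: at a total interpretation the two conditions in the implication clause of Definition~\ref{def:satfals} coincide, and the total-model lemma transfers satisfaction of bodies and heads between $\tuple{T,T}$ and $T$, so the statement reduces to bookkeeping over the rules of $\Pi$. The only cosmetic difference is that the paper positions this as a corollary of Lemma~\ref{lem:aux_reduct} (take $H=T$ and use Proposition~\ref{prop:total_model_reduct} to pass from $F^T$ back to $F$), which yields exactly the same bridge $\tuple{T,T}\models F$ iff $T\models F$ that you obtain directly from Lemma~\ref{lem:satisfaction_total_models}.
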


\begin{proposition}\label{prop:htreduct}
For any $\X5$-intepretation $\tuple{H,T}$ and any program $\Pi$: \begin{center} $\tuple{H,T} \models \Pi$ iff $H$ is a model of $\Pi^T$ and $T$ is a model of $\Pi$.
\end{center}
\vspace*{-23pt}\qed
\end{proposition}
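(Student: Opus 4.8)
The plan is to reduce the statement to a rule-by-rule equivalence and then unfold the $\X5$ satisfaction condition for implication against the definition of a model of an explicit program. Since $\tuple{H,T} \models \Pi$ holds exactly when $\tuple{H,T} \models (F \to G)$ for every rule $F \to G \in \Pi$, and $\Pi^T = \{F^T \to G^T \mid (F \to G) \in \Pi\}$, it suffices to establish, for a single rule, that $\tuple{H,T} \models (F \to G)$ iff $H$ satisfies the reduct rule $F^T \to G^T$ and $T$ satisfies $F \to G$; conjoining over all rules then gives the claim.

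First I would expand $\tuple{H,T} \models (F \to G)$ using Definition~\ref{def:satfals}, which requires both (i) $\tuple{H,T} \not\models F$ or $\tuple{H,T} \models G$, and (ii) $\tuple{T,T} \not\models F$ or $\tuple{T,T} \models G$. The two clauses split cleanly: clause (i) speaks only about the ``here'' component, and clause (ii) only about the ``there'' component. For clause (i), I would apply Lemma~\ref{lem:aux_reduct} to replace $\tuple{H,T} \models F$ by $H \models F^T$ and $\tuple{H,T} \models G$ by $H \models G^T$, so that (i) becomes ``$H \models G^T$ whenever $H \models F^T$'', which is precisely the condition for $H$ to be a model of the explicit rule $F^T \to G^T$.

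For clause (ii), I would use Lemma~\ref{lem:satisfaction_total_models} to rewrite $\tuple{T,T} \models F$ as $T \models F$ and $\tuple{T,T} \models G$ as $T \models G$, turning (ii) into ``$T \models G$ whenever $T \models F$'', i.e. the condition for $T$ to be a model of $F \to G$ in $\Pi$. (One can also invoke Corollary~\ref{cor:equivalence_for_total_model} at the level of the whole program for the ``there'' conjunct.) Taking the conjunction over all rules of $\Pi$ then yields $\tuple{H,T} \models \Pi$ iff $H$ is a model of $\Pi^T$ and $T$ is a model of $\Pi$, as required.

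There is no genuinely hard step here: the argument is a direct bookkeeping exercise matching the two conjuncts of the implication clause to the two conjuncts in the statement. All the substantive work is discharged by Lemma~\ref{lem:aux_reduct}, which is what licenses the passage from $\X5$ satisfaction at the ``here'' component to ordinary satisfaction of the reduct. The only point demanding care is to confirm that the biconditional of Lemma~\ref{lem:aux_reduct} is applied uniformly to both the body $F$ and the head $G$, so that the rewriting of clause (i) is an exact equivalence rather than a one-directional implication; once that is observed, the matching of clauses is immediate.
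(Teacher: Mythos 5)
Your proof is correct and follows the same route the paper intends: unfold the $\X5$ clause for implication rule by rule, discharge conjunct (i) via Lemma~\ref{lem:aux_reduct} to get that $H$ models $\Pi^T$, and discharge conjunct (ii) via Lemma~\ref{lem:satisfaction_total_models} (equivalently Corollary~\ref{cor:equivalence_for_total_model}) to get that $T$ models $\Pi$. Nothing is missing; the observation that Lemma~\ref{lem:aux_reduct} gives a genuine biconditional for both body and head is exactly the point that makes the rewriting an equivalence.
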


\begin{definition}[Equilibrium model]\label{def:eqmodel}
A total $\X5$-interpretation $\tuple{T,T}$ is an \emph{equilibrium model} of a theory $\Gamma$ if $\tuple{T,T}$ is a model of $\Gamma$ and there is no other model $\tuple{H,T}$ of $\Gamma$ with $H \subset T$.\qed
\end{definition}

\emph{Equilibrium logic (with explicit negation)} is the non-monotonic logic induced by equilibrium models.
The following theorem guarantees that equilibrium models and answer sets coincide for the syntax of programs with nested expressions.
\begin{theorem}\label{th:answersets}
An interpretation $T$ is an answer set of a program $\Pi$ iff $\tuple{T,T}$ is an equilibrium model of $\Pi$.\qed
\end{theorem}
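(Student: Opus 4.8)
The plan is to prove both directions by translating between the ``model of the reduct'' notion that defines answer sets and the ``$\X5$-model'' notion that defines equilibrium models. The bridge is Proposition~\ref{prop:htreduct}, which states that $\tuple{H,T}\models\Pi$ holds exactly when $H$ is a model of the reduct $\Pi^T$ and $T$ is a model of $\Pi$. Combined with Corollary~\ref{cor:equivalence_for_total_model} (total $\X5$-models of $\Pi$ coincide with models of $\Pi$) and Proposition~\ref{prop:total_model_reduct} (which lets me pass between $T\models F$ and $T\models F^T$), this should match the $\subseteq$-minimality in the definition of answer set with the non-existence of a strictly smaller $\X5$-model demanded by the definition of equilibrium model.

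First I would prove the left-to-right direction. Assume $T$ is an answer set, i.e.\ a $\subseteq$-minimal model of $\Pi^T$. Since $T$ is a model of $\Pi^T$, applying Proposition~\ref{prop:total_model_reduct} rulewise to the bodies and heads shows that $T$ is a model of $\Pi$, and Corollary~\ref{cor:equivalence_for_total_model} upgrades this to $\tuple{T,T}\models\Pi$. For minimality, suppose toward a contradiction that $\tuple{H,T}\models\Pi$ for some $H\subset T$; since $\tuple{H,T}$ is an $\X5$-interpretation, $H$ is consistent, and Proposition~\ref{prop:htreduct} gives that $H$ is a model of $\Pi^T$ with $H\subset T$, contradicting the $\subseteq$-minimality of $T$. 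Hence $\tuple{T,T}$ is an equilibrium model.

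For the converse, assume $\tuple{T,T}$ is an equilibrium model of $\Pi$. From $\tuple{T,T}\models\Pi$ I obtain that $T$ is a model of $\Pi$ (Corollary~\ref{cor:equivalence_for_total_model}), hence a model of $\Pi^T$ (Proposition~\ref{prop:total_model_reduct}, applied rulewise). To establish $\subseteq$-minimality, take any model $H\subset T$ of $\Pi^T$; as $H\subseteq T$ and $T$ is consistent, $H$ is consistent, so $\tuple{H,T}$ is a legitimate $\X5$-interpretation, and Proposition~\ref{prop:htreduct} (using that $T$ is a model of $\Pi$) yields $\tuple{H,T}\models\Pi$ with $H\subset T$, contradicting the equilibrium condition. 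Therefore $T$ is $\subseteq$-minimal and thus an answer set.

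Because all the real work is delegated to the bridging results, I do not expect a single deep obstacle: the argument is essentially bookkeeping. The one point that needs genuine care is checking, at each use of Proposition~\ref{prop:htreduct}, that the set playing the role of $H$ really is a \emph{consistent} set of literals with $H\subseteq T$, so that it is simultaneously admissible as the ``here'' component of an $\X5$-interpretation and as a candidate model of the reduct; this is why the consistency and subset remarks are inserted explicitly on both sides of the equivalence.
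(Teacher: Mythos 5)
Your proof is correct and follows essentially the same route the paper intends: it leans on Proposition~\ref{prop:htreduct} (together with Proposition~\ref{prop:total_model_reduct} and Corollary~\ref{cor:equivalence_for_total_model}), which the paper sets up precisely as the bridge between models of the reduct $\Pi^T$ and $\X5$-models $\tuple{H,T}$ of $\Pi$, so that the $\subseteq$-minimality clauses in the two definitions match up directly. The explicit consistency and $H\subseteq T$ checks you insert are exactly the right points of care, and nothing further is needed.
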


\review{To conclude this section, we provide an alternative reduct definition for arbitrary formulas (and not just nested expressions) obtained as a generalisation of Ferraris' reduct~\cite{Fer05}.
This generalisation introduces a main feature\footnote{We also provide a translation for implications $\alpha \to \beta$ but this is not strictly necessary: for computing the reduct, they can be previously replaced by $\neg \alpha \vee \beta$.} with respect to~\cite{Fer05}: it actually uses two dual transformations, $\varphi^T_+$ and $\varphi^T_-$, to obtain a symmetric behaviour depending on the number of explicit negations in the scope.

\begin{definition}\label{def:Ferraris_reduct}
Given a formula $\varphi$ and an interpretation $T$ (a consistent set of explicit literals) we define the following pair of mutually recursive transformations:
\[
\begin{array}{cc}
\varphi^T_+ \eqdef \left\{
\begin{array}{cl}
\bot & \text{if } T \not\models \varphi \\
p & \text{if } \varphi=p \in \At, p \in T \\
\alpha^T_+ \otimes \beta^T_+ & \text{if } T \models \varphi, \varphi=\alpha \otimes \beta, \\
& \text{ for } \otimes \in\{\vee,\wedge \}\\ 
\neg (\alpha^T_+) \vee \beta^T_+ & \text{if } T \models \varphi, \varphi=\alpha \to \beta \\
\neg (\alpha^T_+) & \text{if } T \models \varphi, \varphi=\neg \alpha,  \\
\sneg (\alpha^T_-) & \text{if } T \models \varphi, \varphi=\sneg \alpha
\end{array}
\right.
&
\varphi^T_- \eqdef \left\{
\begin{array}{cl}
\top & \text{if } T \not\falsif \varphi\\
p & \text{if } \varphi=p \in \At, \sneg p \in T \\
\alpha^T_- \otimes \beta^T_- & \text{if } T \falsif \varphi, \varphi=\alpha \otimes \beta, \\
& \text{ for } \otimes \in\{\vee,\wedge \}\\ 
\beta^T_- & \text{if } T \falsif \varphi, \varphi=\alpha \to \beta\\
\bot & \text{if } T \falsif \varphi, \varphi=\neg \alpha\\
\sneg (\alpha^T_+) & \text{if } T \falsif \varphi, \varphi=\sneg \alpha
\end{array}
\right.
\end{array}
\]
The reduct $\Gamma^T_+$ of a theory $\Gamma$ is just defined as the set $\{\varphi^T_+ \mid \varphi \in \Gamma\}$.\qed
\end{definition}
For instance, given  $\varphi=\eqref{f:bird}$ and $T=\{\sneg \I{bird}\}$, the reader can check that the application of the definition above eventually produces the formula $\varphi^T_+ = \neg \neg \bot \vee \sneg (\I{bird} \wedge \top)$ which is equivalent to $\sneg \I{bird}$.
If we take $T=\{\I{flies}\}$ instead, the result is $\varphi^T_+=\neg \neg \bot \vee \sneg (\top \wedge \sneg \I{flies})$ that is equivalent to $\I{flies}$.
As a third example, if we take $T=\{\I{bird}\}$ then we directly get $\varphi^T_+=\bot$.

\begin{theorem}\label{th:Ferraris_reduct}
For any formula $\varphi$ and any pair of interpretations $H \subseteq T$:
\begin{enumerate}
\item[](i) \ $H \models \varphi^T_+$ iff $\tuple{H,T} \models \varphi$;
\item[](ii) $H \falsif \varphi^T_-$ iff $\tuple{H,T} \falsif \varphi$.\qed
\end{enumerate}
\end{theorem}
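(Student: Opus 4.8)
The plan is to prove both statements (i) and (ii) simultaneously by structural induction on $\varphi$. Simultaneity is forced by the mutual recursion in Definition~\ref{def:Ferraris_reduct}: the clause $(\sneg\alpha)^T_+ = \sneg(\alpha^T_-)$ appeals to the $-$-transform while $(\sneg\alpha)^T_- = \sneg(\alpha^T_+)$ appeals to the $+$-transform, so neither claim can be closed without the other's induction hypothesis one level down. I would first note that $\varphi^T_+$ and $\varphi^T_-$ are always \emph{nested expressions} (no implication survives the reduct, as $\alpha\to\beta$ is rewritten to $\neg(\alpha^T_+)\vee\beta^T_+$), so the relations $H\models\varphi^T_\pm$ and $H\falsif\varphi^T_\pm$ are exactly the single-interpretation relations of Section~\ref{sec:nested}. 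As a preliminary, I would extend Lemma~\ref{lem:satisfaction_total_models} from nested expressions to \emph{arbitrary} formulas, i.e. $\tuple{T,T}\models\varphi \Leftrightarrow T\models\varphi$ and $\tuple{T,T}\falsif\varphi \Leftrightarrow T\falsif\varphi$; only the implication case is new and it is immediate, since with $H=T$ the two conditions (i) and (ii) of $\X5$-implication collapse to the single classical condition $T\not\models\alpha$ or $T\models\beta$.

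With this in hand, the guard-failure and base cases are uniform. If the $+$-guard fails ($T\not\models\varphi$) then $\varphi^T_+=\bot$, so $H\not\models\varphi^T_+$; by the total-model equivalence $\tuple{T,T}\not\models\varphi$, and the contrapositive of Persistence (Theorem~\ref{th:persistence}(i)) gives $\tuple{H,T}\not\models\varphi$, matching both sides. Dually, if the $-$-guard fails ($T\not\falsif\varphi$) then $\varphi^T_-=\top$, $H\not\falsif\top$, and Theorem~\ref{th:persistence}(ii) yields $\tuple{H,T}\not\falsif\varphi$. The atomic cases reduce to $p\in H$ (for $\models$) and $\sneg p\in H$ (for $\falsif$), which coincide with the reduct clauses $p^T_+=p$, $p^T_-=p$ under their guards, using $H\subseteq T$; and $\bot$ is handled by $\bot^T_+=\bot^T_-=\bot$.

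For the inductive step under the guard, the connective cases split into the ``easy'' ones and the ``interesting'' ones. For $\wedge$ and $\vee$ the reduct is homomorphic ($(\alpha\otimes\beta)^T_\pm=\alpha^T_\pm\otimes\beta^T_\pm$), so the claim follows directly from the induction hypotheses together with the recursive $\models/\falsif$ clauses of Definition~\ref{def:satfals}. For $\sneg$ I would use the clauses $H\models\sneg(\alpha^T_-) \Leftrightarrow H\falsif\alpha^T_-$ and $H\falsif\sneg(\alpha^T_+)\Leftrightarrow H\models\alpha^T_+$, and then cross-apply the \emph{other} induction hypothesis (part (ii) to prove (i), and part (i) to prove (ii)), which is exactly where simultaneity pays off. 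The implication cases are the crux: for (i) under the guard $T\models\alpha\to\beta$, I would observe that $H\models\neg(\alpha^T_+)\vee\beta^T_+$ is, via the induction hypotheses and Proposition~\ref{prop:default_negation}, equivalent to ``$\tuple{H,T}\not\models\alpha$ or $\tuple{H,T}\models\beta$'', i.e. precisely condition (i) of $\X5$-implication; the key point is that the guard $T\models\alpha\to\beta$ translates (through the total-model equivalence) into ``$\tuple{T,T}\not\models\alpha$ or $\tuple{T,T}\models\beta$'', which is \emph{exactly} condition (ii) of the $\X5$ clause, so that condition is automatically discharged and the equivalence closes. Symmetrically, for (ii) under $T\falsif\alpha\to\beta$, the guard unfolds classically to $T\models\alpha$ and $T\falsif\beta$, i.e. $\tuple{T,T}\models\alpha$, which is exactly the ``there'' requirement in the falsification clause $\tuple{H,T}\falsif\alpha\to\beta \Leftrightarrow \tuple{T,T}\models\alpha \text{ and } \tuple{H,T}\falsif\beta$; since $\varphi^T_-=\beta^T_-$, the claim reduces by the induction hypothesis to $\tuple{H,T}\falsif\beta$. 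Default negation $\neg\alpha$ can be treated either directly or, more economically, as the instance $\alpha\to\bot$ of the implication case.

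The main obstacle, and the step I would write out most carefully, is exactly this handling of implication (and hence of $\neg$): one must verify that the guard conditions $T\models\varphi$ and $T\falsif\varphi$ carry \emph{precisely} the ``there''-component of the two-level $\X5$ semantics of $\to$, no more and no less, so that the single surviving ``here''-condition in the reduct matches the $\X5$ clause. This depends on the extended total-model equivalence and on Persistence being available for the contrapositive arguments, and it is the only place where the asymmetry between the $+$ and $-$ transforms (dropping $\alpha$ entirely in $(\alpha\to\beta)^T_-=\beta^T_-$, versus keeping $\neg(\alpha^T_+)$ in $(\alpha\to\beta)^T_+$) must be reconciled with the definitions in Definition~\ref{def:satfals}.
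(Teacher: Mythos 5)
Your proof is correct and follows essentially the same route as the paper: a simultaneous structural induction on (i) and (ii), using Persistence and the total-model correspondence to discharge the guard-failure cases and to show that the guards of the reduct carry exactly the ``there''-component of the $\X5$ clauses for $\to$ and $\neg$, with the induction hypotheses cross-applied at $\sneg$. The only cosmetic slip is that the step $H\models\neg(\alpha^T_+)$ iff $H\not\models\alpha^T_+$ rests on the single-interpretation clause for $\neg$ from Section~\ref{sec:nested} (the reducts being nested expressions), not on Proposition~\ref{prop:default_negation}, which concerns $\X5$-interpretations.
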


From Lemma~\ref{lem:aux_reduct} and Theorem~\ref{th:Ferraris_reduct} we immediately conclude:

\begin{corollary}\label{cor:reducts}
For any nested expression $F$ and any pair of interpretations $H \subseteq T$:
\begin{enumerate}[ leftmargin=15pt]
\item[(i)] $H \models F^T$ iff $T\models F$ and $H \models F^T_+$;
\item[(ii)] $H \falsif F^T$ iff $T\falsif F$ and $H \falsif F^T_-$.\qed
\end{enumerate}
\end{corollary}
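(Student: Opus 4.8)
The plan is to chain the two cited results through the common pivot $\tuple{H,T} \models F$ (resp.\ $\tuple{H,T} \falsif F$), exploiting that a nested expression $F$ is at the same time a formula: Lemma~\ref{lem:aux_reduct} relates the reduct $F^T$ to the $\X5$-interpretation $\tuple{H,T}$, while Theorem~\ref{th:Ferraris_reduct} relates the split reducts $F^T_+$ and $F^T_-$ to that same interpretation.

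For item~(i) I would first compose equivalences. By the first bullet of Lemma~\ref{lem:aux_reduct}, $H \models F^T$ holds iff $\tuple{H,T} \models F$, and by Theorem~\ref{th:Ferraris_reduct}(i) the latter holds iff $H \models F^T_+$. This already yields $H \models F^T$ iff $H \models F^T_+$; what remains is to graft on the conjunct $T \models F$. For the left-to-right implication I would start from $H \models F^T$, obtain $\tuple{H,T} \models F$ as above, apply Persistence (Theorem~\ref{th:persistence}(i)) to get $\tuple{T,T} \models F$, and finally use Lemma~\ref{lem:satisfaction_total_models} to read this off as $T \models F$; together with $H \models F^T_+$ this gives both conjuncts. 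For the converse I would simply run the same two equivalences backwards from $H \models F^T_+$ to recover $H \models F^T$.

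Item~(ii) is perfectly symmetric: I would use the second bullets of Lemma~\ref{lem:aux_reduct} and of Theorem~\ref{th:Ferraris_reduct}, the falsification clause of Persistence (Theorem~\ref{th:persistence}(ii)), and again Lemma~\ref{lem:satisfaction_total_models}, replacing $\models$ by $\falsif$ and $F^T_+$ by $F^T_-$ throughout.

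I do not expect a genuine obstacle here, since the corollary is a direct composition of results already in hand; the only point needing attention is that the conjunct $T \models F$ (resp.\ $T \falsif F$) is in fact redundant --- it is entailed by $H \models F^T_+$ because the $+$ reduct collapses to $\bot$ whenever $T \not\models F$ --- so it plays no role in the converse direction but must still be produced explicitly, via Persistence, in the forward direction.
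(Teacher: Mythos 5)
Your proof is correct and follows essentially the same route the paper intends: the corollary is obtained by chaining Lemma~\ref{lem:aux_reduct} and Theorem~\ref{th:Ferraris_reduct} through the common pivot $\tuple{H,T}\models F$ (resp.\ $\tuple{H,T}\falsif F$), with the extra conjunct $T\models F$ (resp.\ $T\falsif F$) supplied by Persistence and Lemma~\ref{lem:satisfaction_total_models}, or equivalently read off from the fact that $F^T_+$ collapses to $\bot$ (resp.\ $F^T_-$ to $\top$) otherwise. Your observation about the redundancy of that conjunct is accurate and harmless.
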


\begin{corollary}\label{cor:reducteq}
$\tuple{T,T}$ is an equilibrium model of $\Gamma$ iff $T$ is a minimal model of $\Gamma^T_+$.\qed
\end{corollary}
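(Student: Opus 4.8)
The plan is to read the statement off almost directly from Theorem~\ref{th:Ferraris_reduct}(i), which carries all the real content. First I would lift part~(i) from a single formula to the whole theory. Each reduct $\varphi^T_+$ is a nested expression --- the clause for implication rewrites $\alpha\to\beta$ as $\neg(\alpha^T_+)\vee\beta^T_+$, so no occurrence of $\to$ survives and the single-interpretation relation `$\models$' is well defined on every element of $\Gamma^T_+=\{\varphi^T_+\mid\varphi\in\Gamma\}$. Quantifying the equivalence $H\models\varphi^T_+ \Leftrightarrow \tuple{H,T}\models\varphi$ over all $\varphi\in\Gamma$ then yields, for every $H\subseteq T$,
\[
  H \models \Gamma^T_+ \quad\Longleftrightarrow\quad \tuple{H,T}\models\Gamma .
\]

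With this equivalence the two clauses of Definition~\ref{def:eqmodel} match the two clauses of ``$T$ is a $\subseteq$-minimal model of $\Gamma^T_+$'' one by one. Instantiating the display at $H=T$ shows that $\tuple{T,T}$ is a model of $\Gamma$ iff $T$ is a model of $\Gamma^T_+$, which is the first requirement on each side. For minimality, since $T$ is consistent every proper subset $H\subset T$ is again a consistent set of literals and hence a legitimate interpretation, so the display applies to each such $H$: there is an $H\subset T$ with $\tuple{H,T}\models\Gamma$ exactly when there is an $H\subset T$ with $H\models\Gamma^T_+$. Negating both sides, the absence of a proper countermodel of $\Gamma$ over the ``there'' point $T$ coincides with the $\subseteq$-minimality of $T$ among the models of $\Gamma^T_+$. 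Conjoining the two clauses gives the biconditional in both directions.

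Because Theorem~\ref{th:Ferraris_reduct} is available, the argument is essentially bookkeeping and I do not expect a genuine obstacle. The only points needing care are the two just flagged: verifying that $\Gamma^T_+$ lies inside the nested-expression fragment on which `$\models$' is defined, and checking that the candidate counter-interpretations range over exactly the same family --- the proper subsets of $T$ --- in the equilibrium condition and in the minimality condition, so that no models with $H\not\subseteq T$ are silently gained or lost when passing between the two formulations.
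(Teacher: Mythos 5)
Your argument is correct and is exactly the route the paper intends: the corollary is presented as an immediate consequence of Theorem~\ref{th:Ferraris_reduct}(i), lifted from single formulas to the theory $\Gamma^T_+$ and then matched clause-by-clause against Definition~\ref{def:eqmodel}, with the counter-interpretations ranging over the proper subsets $H\subset T$ (which remain consistent, hence legitimate) on both sides. The two side points you flag --- that no implication survives in $\varphi^T_+$ so the single-interpretation relation applies, and that minimality of $T$ only ever involves subsets of $T$ --- are precisely the bookkeeping the paper leaves implicit.
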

Back to the example formula $\varphi=$\eqref{f:bird}, taking $T=\{\sneg \I{bird}\}$ we saw that $\varphi^T_+$ is equivalent to $\sneg \I{bird}$ whose minimal model is obviously $T$.
Therefore, $\tuple{T,T}$ is an equilibrium model.}{\ref{rev1.1}}

\section{Multivalued characterisation and equivalence relations}
\label{sec:fiveval}

An alternative way of characterising $\X5$ is as a five-valued logic defined as follows.
Given any $\X5$-interpretation $M=\tuple{H,T}$ we define its corresponding 5-valued mapping $M: \At \to \{-2,-1,0,1,2\}$ so that, for any atom $p\in \At$:
$$
M(p) \eqdef \left\{
\begin{array}{rl}
2 & \mbox{if } p \in H\\
-2 & \mbox{if } \sneg p \in H\\
1 & \mbox{if } p \in T\setminus H\\
-1 & \mbox{if } \sneg p \in T\setminus H\\
0 & \mbox{otherwise, i.e., } p \not\in T, \sneg p \not\in T
\end{array}
\right.
$$
We can read these five values as follows: $2$ = \emph{proved to be true}; $-2$ = \emph{proved to be false}; $1$ = \emph{true by default}; $-1$ = \emph{false by default}; and $0$ = \emph{undefined}.
Notice that values $1$ and $-1$ are used for explicit literals in $T \setminus H$.
As a consequence:
\begin{proposition}
An $\X5$-interpretation $M=\tuple{H,T}$ is total (i.e. $H=T$) iff $M(p) \in \{-2,0,2\}$ for all $p \in \At$.\qed
\end{proposition}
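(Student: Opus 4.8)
The plan is to read the claim directly off the clause-by-clause definition of the five-valued mapping $M$. The guiding observation is that the two values $1$ and $-1$ are, by construction, assigned to an atom $p$ exactly when $p$ (respectively $\sneg p$) lies in the ``gap'' $T \setminus H$, whereas the values $-2,0,2$ never reference this gap. Since every $\X5$-interpretation satisfies $H \subseteq T$ by definition, the set $T \setminus H$ is empty precisely when $H=T$; so $M$ avoids the values $1,-1$ if and only if the interpretation is total. First I would record that $M$ is well defined: because $H$ and $T$ are consistent and $H \subseteq T$, the five guard conditions are mutually exclusive and jointly exhaustive (e.g.\ $p \in H$ forces $p \in T$ and, by consistency of $T$, $\sneg p \notin T$, which rules out the other four clauses), so each atom is assigned exactly one value.

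For the left-to-right direction, I would assume $H=T$, so that $T \setminus H = \emptyset$. Then for every atom $p$ neither the clause yielding $1$ (which requires $p \in T \setminus H$) nor the clause yielding $-1$ (which requires $\sneg p \in T \setminus H$) can fire, leaving only the values $2,-2,0$; hence $M(p) \in \set{-2,0,2}$. For the converse I would argue by contraposition: if $H \neq T$, then since $H \subseteq T$ there is an explicit literal $\ell \in T \setminus H$. If $\ell = p$ then the third clause gives $M(p)=1$, and if $\ell = \sneg p$ then the fourth clause gives $M(p)=-1$; in either case $M(p) \notin \set{-2,0,2}$, so the right-hand side of the equivalence fails. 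Combining the two directions yields the biconditional.

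There is essentially no hard step here: the statement is a direct bookkeeping consequence of how the five values are laid out, together with the standing constraint $H \subseteq T$. The only point deserving an explicit sentence is the well-definedness remark above, since it is what justifies that ``$M(p)=1$'' really is equivalent to ``$p \in T \setminus H$'' (and symmetrically for $-1$), rather than being pre-empted by an earlier matching clause. With that in hand, the proof is a one-line unfolding of the definition in each direction.
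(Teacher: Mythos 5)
Your proof is correct and follows essentially the same route as the paper, which treats the proposition as an immediate consequence of the remark that the values $1$ and $-1$ are reserved for explicit literals in $T \setminus H$ (so they occur for some atom iff $T \setminus H \neq \emptyset$ iff $H \neq T$). Your added check that the five guard conditions are mutually exclusive is a reasonable extra precaution but not a departure from the paper's argument.
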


\begin{definition}[Valuation of formulas]\label{def:valuation}
This 5-valuation can be extended to arbitrary formulas in the following way:
\begin{IEEEeqnarray*}{lCl+x*}
M(\bot) & \eqdef & -2 \\
M(\top) & \eqdef & 2 \\
M(\varphi \wedge \psi) & \eqdef & \min(M(\varphi),M(\psi)) \\
M(\varphi \vee \psi) & \eqdef & \max(M(\varphi),M(\psi)) \\
M(\varphi \to \psi) & \eqdef & \left\{
\begin{array}{cl}
2 & \mbox{if } M(\varphi) \leq \max(M(\psi),0) \\
M(\psi) & \mbox{otherwise }
\end{array}
\right.\\
M(\sneg \varphi) & \eqdef & -M(\varphi)
&\qed
\end{IEEEeqnarray*}
\end{definition}
The designated value is $2$, that is, we will understand that a formula is satisfied when $M(\varphi)=2$.
Moreover, a complete correspondence with the satisfaction/falsification of formulas given in the previous section is fixed by the following theorem:
\begin{theorem}\label{th:corresp}
For any $\X5$-interpretation $M=\tuple{H,T}$ and any formula $\varphi$:
\setlength{\multicolsep}{5pt}
\begin{multicols}{2}
\begin{itemize}[ leftmargin=15pt]
\item $\tuple{H,T} \models \varphi$ iff $M(\varphi)=2$;
\item $\tuple{T,T} \models \varphi$ iff $M(\varphi)>0$;
\item $\tuple{H,T} \falsif \varphi$ iff $M(\varphi)=-2$;
\item $\tuple{T,T} \falsif \varphi$ iff $M(\varphi)<0$.\qed
\end{itemize}
\end{multicols}
\end{theorem}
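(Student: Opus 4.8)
The plan is to prove the four equivalences \emph{jointly} by structural induction on $\varphi$, keeping the interpretation $M=\tuple{H,T}$ fixed throughout. A joint induction is forced on us, because the clause for $\to$ in Definition~\ref{def:satfals} mixes the two worlds: to settle $\tuple{H,T} \models \varphi \to \psi$ one needs information about $\tuple{T,T}$, and the falsification clause needs it as well, so no single bullet can be isolated. The four targets are designed precisely so that, reading $M(\varphi)$ off the five-point scale, the value $2$ marks here-satisfaction, positivity marks there-satisfaction, $-2$ marks here-falsification and negativity marks there-falsification; note that the second and fourth bullets use the very same valuation $M$ of $\tuple{H,T}$ (not a separate valuation of $\tuple{T,T}$), which is what makes the statement self-contained. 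A preliminary fact, established by the same induction, is that $M(\varphi) \in \{-2,-1,0,1,2\}$ for every $\varphi$: this is immediate for atoms, $\bot$ and $\top$, and is preserved by $\min$, $\max$, $-(\cdot)$ and by the two branches of the implication clause. I will use this integrality freely, e.g.\ to pass between ``$a \neq 2$'' and ``$a \leq 1$'', writing $a \eqdef M(\varphi)$ and $b \eqdef M(\psi)$ for the immediate subformulas.

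First I would check the base cases. For $\bot$ we have $M(\bot)=-2$, matching $\tuple{H,T} \not\models \bot$, $\tuple{T,T} \not\models \bot$, $\tuple{H,T} \falsif \bot$ and $\tuple{T,T} \falsif \bot$; $\top$ is symmetric with value $2$; and for an atom $p$ the five defining clauses of $M(p)$ read off exactly whether $p$ or $\sneg p$ lies in $H$ or in $T \setminus H$, so all four equivalences hold by inspection (using $H \subseteq T$, so that $M(p)>0$ iff $p \in T$ and $M(p)<0$ iff $\sneg p \in T$). The connectives $\wedge$, $\vee$ and $\sneg$ are then routine. Since $M(\varphi \wedge \psi)=\min(a,b)$ and $-2$ (resp.\ $2$) is the least (resp.\ greatest) value, the facts ``$\min(a,b)=2$ iff $a=b=2$'', ``$\min(a,b)>0$ iff $a>0$ and $b>0$'', ``$\min(a,b)=-2$ iff $a=-2$ or $b=-2$'' and ``$\min(a,b)<0$ iff $a<0$ or $b<0$'' line up one-for-one with the conjunction clauses; disjunction is dual via $\max$, and $\sneg$ follows from $M(\sneg \varphi)=-a$ together with the observation that negating the value swaps the satisfaction and falsification columns.

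The only substantial case is implication, and I expect it to be the main obstacle. Writing $c \eqdef M(\varphi \to \psi)$, recall that $c=2$ when $a \leq \max(b,0)$ and $c=b$ otherwise. For the falsification targets I would argue directly: $c=-2$ can only arise through the branch $c=b$, so $c=-2$ iff $b=-2$ and $a > \max(-2,0)=0$, which by the induction hypothesis is exactly $\tuple{T,T} \models \varphi$ and $\tuple{H,T} \falsif \psi$, i.e.\ the falsification clause for $\tuple{H,T}$; likewise $c<0$ forces the branch $c=b$ with $b<0$ and $a > \max(b,0)=0$, matching $\tuple{T,T} \models \varphi$ and $\tuple{T,T} \falsif \psi$. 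For the first target I would show $c=2$ iff clauses (i) and (ii) both hold, via a short case split on the sign of $b$: when $b<0$ both clauses reduce to ``$a \leq 0$'', which is precisely $a \leq \max(b,0)$; when $b \geq 0$ a one-line check in each of $b \in \{0,1,2\}$ shows that $[a \neq 2 \text{ or } b=2] \wedge [a \leq 0 \text{ or } b>0]$ is equivalent to $a \leq b = \max(b,0)$. The second target is the cleanest: $c>0$ holds exactly when either $a \leq \max(b,0)$ (giving $c=2$) or $b>0$ (giving $c=b>0$), and this disjunction collapses to ``$a \leq 0$ or $b>0$'', which is the there-world reading of clauses (i)/(ii) at $\tuple{T,T}$.

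The delicate point, and where I would take most care, is keeping the bookkeeping of the implication clause honest: the semantic clause for $\to$ must be fed the correct induction hypotheses (the here-reading for clause (i), the there-reading for clause (ii)), and the piecewise valuation has to be shown to agree with all four targets \emph{simultaneously} rather than one at a time. Once the four targets for $\varphi \to \psi$ are verified against the single number $c$ under the case split on $\mathrm{sign}(b)$ and on whether $a \leq \max(b,0)$, the induction closes and the theorem follows.
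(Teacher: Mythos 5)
Your proof is correct and takes the approach one would expect the paper to take: a simultaneous structural induction on all four claims for a fixed $M=\tuple{H,T}$, with atoms, $\wedge$, $\vee$ and $\sneg$ read off directly from the truth tables and the only real work in the implication case, which your case split on $\mathrm{sign}(b)$ and on whether $a\le\max(b,0)$ settles correctly (including the observation that the branch $c=b$ can never yield $c=2$). One pedantic note on the subcase $b<0$ of the first target: clause (i) there reduces to $a\ne 2$, i.e.\ $a\le 1$, not to $a\le 0$; it is only the \emph{conjunction} of (i) and (ii) that collapses to $a\le 0=\max(b,0)$, which is what you in fact need, so the argument stands.
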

The equilibrium condition given in Definition~\ref{def:eqmodel} can be rephrased in 5-valued terms as follows.
Given two $\X5$-interpretations $M=\tuple{H,T}$ and $M'=\tuple{H',T'}$ we say that $M$ is \emph{smaller} than $M'$, written $M \leq M'$, when $T=T'$ and $H \subseteq H'$.
\begin{proposition}\label{prop:leq}
Let $M$ and $M'$ be a pair of $\X5$-interpretations.
Then $M \leq M'$ iff, for any atom $p \in \At$, the following three conditions hold:
\begin{enumerate}[ leftmargin=15pt]
    \item $M(p)=0$ iff $M'(p)=0$;
    \item If $M(p) >0$, then $M(p) \leq M'(p)$;
    \item If $M(p) <0$, then $M'(p) \leq M(p)$.\qed
\end{enumerate}
\end{proposition}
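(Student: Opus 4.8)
The plan is to reduce both sides of the biconditional to plain statements about membership of literals in $H$ and $T$, exploiting the fact that the five values are merely a compact encoding of this membership. Concretely, I would first record the immediate reading of the definition of $M(p)$: the \emph{sign} of $M(p)$ records membership in $T$, namely $M(p)>0$ iff $p\in T$, $M(p)<0$ iff $\sneg p\in T$, and $M(p)=0$ iff neither $p$ nor $\sneg p$ belongs to $T$; while the \emph{magnitude} records membership in $H$, namely $M(p)=2$ iff $p\in H$, $M(p)=-2$ iff $\sneg p\in H$, and $|M(p)|=1$ iff the corresponding literal lies in $T\setminus H$. Consistency of $T$ and $H\subseteq T$ guarantee that the five cases are mutually exclusive, so these readings are unambiguous. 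Since $H$ and $T$ are sets of explicit literals, $H\subseteq H'$ is equivalent to the two implications $p\in H\Rightarrow p\in H'$ and $\sneg p\in H\Rightarrow \sneg p\in H'$ ranging over all atoms, and $T=T'$ amounts to agreement on all such memberships.

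For the forward direction I would assume $M\leq M'$, i.e. $T=T'$ and $H\subseteq H'$, and verify the three conditions atom by atom. Condition (1) is immediate from $T=T'$, since $M(p)=0$ and $M'(p)=0$ both say exactly that $p,\sneg p\notin T=T'$. For (2), if $M(p)>0$ then $p\in T=T'$, so $M'(p)>0$; a short case split then finishes it: if $M(p)=2$ then $p\in H\subseteq H'$ forces $M'(p)=2$, while if $M(p)=1$ then $M'(p)\in\{1,2\}$, so in either case $M(p)\leq M'(p)$. Condition (3) is the mirror image, using $\sneg p\in H\subseteq H'$ when $M(p)=-2$.

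For the backward direction I would assume (1)--(3) and recover $T=T'$ and $H\subseteq H'$. The first task is to upgrade the one-directional inequalities of (2) and (3) to full agreement of signs. From (1) we already have $M(p)=0\iff M'(p)=0$. If $M(p)>0$ then (2) gives $M'(p)\geq M(p)>0$; conversely, if $M'(p)>0$ then (1) rules out $M(p)=0$, while $M(p)<0$ would force $M'(p)<0$ by (3), a contradiction, so $M(p)>0$. The negative case is symmetric via (3). Hence $M(p)>0\iff M'(p)>0$ and $M(p)<0\iff M'(p)<0$, which, translated back through the sign reading, gives $p\in T\iff p\in T'$ and $\sneg p\in T\iff \sneg p\in T'$ for every atom, i.e. $T=T'$. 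Finally, $H\subseteq H'$ follows by reading off the extremal values: $p\in H$ means $M(p)=2$, whence (2) gives $M'(p)\geq 2$, so $M'(p)=2$ and $p\in H'$; dually, $\sneg p\in H$ means $M(p)=-2$, whence (3) gives $M'(p)\leq -2$, so $\sneg p\in H'$.

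I expect the only real subtlety to be this upgrade step in the backward direction: conditions (2) and (3) are stated as one-way inequalities, so establishing both inclusions needed for $T=T'$ (rather than just one of them) requires combining the zero-biconditional (1) with a small contradiction argument that excludes opposite signs. Everything else is a routine unfolding of the definition of $M(p)$ together with a finite case analysis on the value it takes.
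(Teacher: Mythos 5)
Your proof is correct and follows exactly the routine unfolding of the definition that the paper leaves implicit (the proposition is stated without an explicit proof); in particular, your backward-direction step of combining condition (1) with a short sign-contradiction argument to upgrade the one-way inequalities of (2) and (3) into full agreement of signs, and hence $T=T'$, handles the only nontrivial point. Nothing is missing.
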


\begin{theorem}
A total interpretation $M=\tuple{T,T}$ is an equilibrium model of a theory $\Gamma$ iff $M(\varphi)=2$ for all $\varphi \in \Gamma$ and there is no $M' < M$ such that $M'(\varphi)=2$ for all $\varphi \in \Gamma$.
\end{theorem}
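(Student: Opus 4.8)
The plan is to show that the two 5-valued conditions in the statement are literal paraphrases of Definition~\ref{def:eqmodel}, with Theorem~\ref{th:corresp} supplying all of the translation. Definition~\ref{def:eqmodel} has two clauses — that $\tuple{T,T}$ be a model of $\Gamma$, and that no model $\tuple{H,T}$ of $\Gamma$ with $H \subset T$ exists — so I would match each clause to its counterpart and then conjoin them.

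First I would handle the model clause. Since $M = \tuple{T,T}$ is total, the first bullet of Theorem~\ref{th:corresp}, read with $M = \tuple{T,T}$, gives $\tuple{T,T} \models \varphi$ iff $M(\varphi) = 2$. Quantifying over $\varphi \in \Gamma$, this says exactly that $\tuple{T,T}$ is a model of $\Gamma$ iff $M(\varphi) = 2$ for all $\varphi \in \Gamma$, which is the first half of the statement.

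Next I would characterise the interpretations $M'$ with $M' < M$. By the definition of $\leq$, an $\X5$-interpretation $M' = \tuple{H',T'}$ satisfies $M' \leq \tuple{T,T} = M$ iff $T' = T$ and $H' \subseteq T$; since every $\X5$-interpretation already satisfies $H' \subseteq T'$, the second condition is automatic once $T' = T$. Hence $M' < M$ holds iff $M' = \tuple{H,T}$ for some $H \subseteq T$ with $H \neq T$, i.e. $H \subset T$. This is the step most deserving of care, since it is where the abstract order on 5-valued mappings must be shown to coincide precisely with the set inclusion $H \subset T$ used in Definition~\ref{def:eqmodel}; one may invoke Proposition~\ref{prop:leq} if the atom-wise formulation is preferred, but the direct reading of $\leq$ suffices.

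Finally I would translate the minimality clause. Applying the first bullet of Theorem~\ref{th:corresp} again, now to $M' = \tuple{H,T}$, gives $M'(\varphi) = 2$ iff $\tuple{H,T} \models \varphi$; quantifying over $\Gamma$, $M'(\varphi) = 2$ for all $\varphi \in \Gamma$ iff $\tuple{H,T}$ is a model of $\Gamma$. Combining this with the previous step, the assertion ``there is no $M' < M$ with $M'(\varphi) = 2$ for all $\varphi \in \Gamma$'' is equivalent to ``there is no model $\tuple{H,T}$ of $\Gamma$ with $H \subset T$''. The conjunction of the two translated clauses is then verbatim Definition~\ref{def:eqmodel}, so both sides of the biconditional coincide and the theorem follows. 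I expect no genuine obstacle here beyond keeping the order-theoretic bookkeeping of the third paragraph exact.
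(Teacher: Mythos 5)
Your proposal is correct and follows exactly the route the paper takes: its proof is the one-liner ``It follows from Theorem~\ref{th:corresp} and the definition of the $\leq$ relation,'' and your three paragraphs are simply that argument written out in full. The order-theoretic step you flag as delicate (that $M' < \tuple{T,T}$ iff $M' = \tuple{H,T}$ with $H \subset T$) is handled correctly.
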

\begin{proof}
It follows from Theorem~\ref{th:corresp} and the definition of $\leq$ relation.
\end{proof}

The truth tables derived from Definition~\ref{def:valuation} are depicted in Figure~\ref{fig:tables}, including the tables for derived operators `$\neg$', `$\leftrightarrow$' and `$\Leftrightarrow$'.
Note that the table for $\neg \varphi=(\varphi \to \bot)$ is just the first column of the table for `$\to$' since the evaluation of `$\bot$' is fixed to $-2$.
\begin{figure}[htbp]
\centering
$$
\begin{array}{c@{\hspace{5pt}}c}
\begin{array}{r|rrrrr}
\wedge & -2 & -1 & 0 & 1 & 2\\
\hline
-2 & -2 & -2 & -2 & -2 & -2 \\
-1 & -2 & -1 & -1 & -1 & -1 \\
0 & -2 & -1 & 0 & 0 & 0 \\
1 & -2 & -1 & 0 & 1 & 1 \\
2 & -2 & -1 & 0 & 1 & 2
\end{array}
&
\begin{array}{r|rrrrr}
\vee & -2 & -1 & 0 & 1 & 2\\
\hline
-2 & -2 & -1 & 0 & 1 & 2 \\
-1 & -1 & -1 & 0 & 1 & 2 \\
0 & 0 & 0 & 0 & 1 & 2 \\
1 & 1 & 1 & 1 & 1 & 2 \\
2 & 2 & 2 & 2 & 2 & 2
\end{array}
\\ \\
\begin{array}{r|rrrrr}
\to & -2 & -1 & 0 & 1 & 2\\
\hline
-2 & 2 & 2 & 2 & 2 & 2 \\
-1 & 2 & 2 & 2 & 2 & 2 \\
0 & 2 & 2 & 2 & 2 & 2 \\
1 & -2 & -1 & 0 & 2 & 2 \\
2 & -2 & -1 & 0 & 1 & 2
\end{array}
&
\begin{array}{c@{\hspace{20pt}}c}
\begin{array}{r|r}
\varphi & \sneg \varphi\\
\hline
-2 & 2 \\
-1 & 1 \\
0 & 0 \\
1 & -1 \\
2 & -2 
\end{array}
&
\begin{array}{r|r}
\varphi & \neg \varphi\\
\hline
-2 & 2 \\
-1 & 2 \\
0 & 2 \\
1 & -2 \\
2 & -2 
\end{array}
\end{array}
\\ \\
\begin{array}{r|rrrrr}
\leftrightarrow & -2 & -1 & 0 & 1 & 2\\
\hline
-2 & 2 & 2 & 2 & -2 & -2 \\
-1 & 2 & 2 & 2 & -1 & -1 \\
0 & 2 & 2 & 2 & 0 & 0 \\
1 & -2 & -1 & 0 & 2 & 1 \\
2 & -2 & -1 & 0 & 1 & 2
\end{array}
&
\begin{array}{r|rrrrr}
\Leftrightarrow & -2 & -1 & 0 & 1 & 2\\
\hline
-2 & 2 & 1 & 0 & -2 & -2 \\
-1 & 1 & 2 & 0 & -1 & -2 \\
0 & 0 & 0 & 2 & 0 & 0 \\
1 & -2 & -1 & 0 & 2 & 1 \\
2 & -2 & -2 & 0 & 1 & 2
\end{array}
\end{array}
$$
\caption{Truth tables for $\X5$.}
\label{fig:tables}
\end{figure}
It is easy to check, for instance, that the following implication is valid:
\begin{eqnarray}
\sneg \varphi \to \neg \varphi \label{f:coher}
\end{eqnarray}
expressing that explicit negation is stronger than default negation\footnote{This property is called the \emph{coherence} principle in~\cite{Per92}.}.
Moreover, default negation is definable in terms of implication and explicit negation (without resorting to $\bot$) since, with some effort, it can be checked that the table for $\neg \varphi$ can be equally obtained through the expression:
\begin{eqnarray*}
\sneg ((\varphi \to \ \sneg \varphi) \to \ \sneg (\varphi \to \ \sneg \varphi))
\end{eqnarray*}
An important remark regarding equivalence is that to express that this (or any) pair of formulas are equivalent, double implication does not suffice.
This is because, as we can see in the tables,
$M(\varphi \leftrightarrow \psi)=2$ does not imply that $M(\varphi)=M(\psi)$. 
To get such a correspondence, we must resort instead to the stronger `$\Leftrightarrow$' for which $M(\varphi \Leftrightarrow \psi)=2$ holds if and only if $M(\varphi)=M(\psi)$.
This lack of the `$\leftrightarrow$' equivalence (we call it \emph{weak} equivalence) has an important consequence: it does not define a congruence relation since $\models \alpha \leftrightarrow \beta$ no longer implies that we can freely replace subformula $\alpha$ by $\beta$ in any arbitrary context: it may be the case that $\not\models \sneg \alpha \leftrightarrow \sneg \beta$.
For instance, we can easily check that $\models p \wedge \neg p \leftrightarrow \bot$ because $\min(M(p),M(\neg p)) \leq 0$ and $M(\bot)=-2$, so $M(p \wedge \neg p \leftrightarrow \bot)=2$ for any $M$.
However, we cannot replace $p \wedge \neg p$ by $\bot$ in any context.
Take the program $\Pi$ consisting of the unique rule
\begin{eqnarray}
\sneg (p \wedge \neg p) \label{f:pnotp}
\end{eqnarray}
with empty body.
Interpretation $T=\{\sneg p\}$ is an answer set because $\Pi^T=\{\sneg (p \wedge \top)\}$ has $\{\sneg p\}$ as minimal model (in fact, it is the unique answer set) but if we replace $p \wedge \neg p$ by $\bot$ in $\Pi$ we get the trivial program $\{\sneg \bot\}$ whose unique answer set is $\emptyset$.
Although weak equivalence does not guarantee arbitrary replacements, it can be used to replace formulas in a theory, as stated below:
\begin{proposition}\label{prop:replace}
Let $\alpha$, $\beta$ be a pair of formulas such that $\models \alpha \leftrightarrow \beta$.
Then, $M \models \Gamma \cup \{\alpha\}$ iff $M \models \Gamma \cup \{\beta\}$ for any theory $\Gamma$ and $\X5$-interpretation $M$.\qed
\end{proposition}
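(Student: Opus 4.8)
The plan is to reduce the claim to a purely local statement about a single pair of formulas. Since $M \models \Gamma \cup \{\alpha\}$ holds exactly when $M \models \Gamma$ and $M \models \alpha$, and likewise $M \models \Gamma \cup \{\beta\}$ holds exactly when $M \models \Gamma$ and $M \models \beta$, the common conjunct $M \models \Gamma$ cancels from both sides. Hence it suffices to prove that the hypothesis $\models \alpha \leftrightarrow \beta$ forces $M \models \alpha$ iff $M \models \beta$ for every $\X5$-interpretation $M = \tuple{H,T}$ (total or not).

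To establish this local equivalence I would unfold the abbreviation $\alpha \leftrightarrow \beta \eqdef (\alpha \to \beta) \wedge (\beta \to \alpha)$. Validity of a conjunction is validity of each conjunct, so the hypothesis splits into $\models \alpha \to \beta$ and $\models \beta \to \alpha$. Fixing an arbitrary $M = \tuple{H,T}$, I apply clause (i) of the satisfaction condition for implication in Definition~\ref{def:satfals}: from $\tuple{H,T} \models \alpha \to \beta$ we get $\tuple{H,T} \not\models \alpha$ or $\tuple{H,T} \models \beta$, that is, $M \models \alpha$ implies $M \models \beta$. Symmetrically, $\tuple{H,T} \models \beta \to \alpha$ gives the converse implication. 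Combining the two directions yields $M \models \alpha$ iff $M \models \beta$, exactly as the reduction above requires, and the proposition follows. Equivalently, one may route the argument through the five-valued reading: by Theorem~\ref{th:corresp}, $M \models \varphi$ means $M(\varphi)=2$, and an inspection of the `$\leftrightarrow$' table shows that $M(\alpha \leftrightarrow \beta)=2$ can hold only when $M(\alpha)$ and $M(\beta)$ are simultaneously equal to $2$ or simultaneously different from $2$; since the hypothesis forces $M(\alpha \leftrightarrow \beta)=2$ for every $M$, the designated value $2$ is preserved.

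There is no genuinely hard step here; the proof is a direct unfolding of the definitions, and only clause (i) of the implication clause (the satisfaction level, not the falsification or `here'/`there' level) is actually used. The point worth stressing is precisely why this argument does not silently upgrade into a congruence result. The whole argument rests on the fact that membership of a formula in a theory is governed solely by whether that formula attains the designated value $2$. Weak equivalence $\models \alpha \leftrightarrow \beta$ guarantees agreement exactly on that top value, but says nothing about the remaining four values. Consequently, as the preceding discussion of rule~\eqref{f:pnotp} illustrates, $\alpha$ and $\beta$ cannot in general be interchanged once they occur inside an operator such as `$\sneg$', which is sensitive to the full five-valued valuation rather than only to satisfaction. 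This is the conceptual boundary the proof must respect, and keeping the reasoning at the top-level satisfaction relation is what keeps the statement both true and correctly scoped.
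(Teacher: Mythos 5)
Your proof is correct and is the direct argument the paper evidently intends (the proposition is stated as immediate): reduce to showing $M \models \alpha$ iff $M \models \beta$, then read off clause (i) of the implication clause from the validity of $\alpha \to \beta$ and $\beta \to \alpha$ (or, equivalently, observe that the `$\leftrightarrow$' table preserves the designated value $2$). Both routes you sketch are valid and essentially the same as each other via Theorem~\ref{th:corresp}, so nothing further is needed.
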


As we mentioned before, for obtaining a congruence relation we can use validity of `$\Leftrightarrow$' instead, which guarantees the following substitution theorem.
\begin{theorem}[Substitution]\label{th:subst}
Let $\alpha$, $\beta$ be a pair of formulas satisfying $\models \alpha \Leftrightarrow \beta$.
Then, for any formula $\varphi$, we also obtain $\models \varphi[\alpha/p] \Leftrightarrow \varphi[\beta/p]$. \qed
\end{theorem}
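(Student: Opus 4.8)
The plan is to work entirely inside the five-valued semantics of Definition~\ref{def:valuation}, exploiting the key fact already observed just before the statement: for every interpretation $M$ we have $M(\varphi \Leftrightarrow \psi) = 2$ if and only if $M(\varphi) = M(\psi)$. Consequently, the hypothesis $\models \alpha \Leftrightarrow \beta$ is equivalent to $M(\alpha) = M(\beta)$ holding for every $\X5$-interpretation $M$, and the goal $\models \varphi[\alpha/p] \Leftrightarrow \varphi[\beta/p]$ is equivalent to establishing $M(\varphi[\alpha/p]) = M(\varphi[\beta/p])$ for every $M$. This reduces the theorem to a statement about equality of five-valued valuations, sidestepping the weakness of `$\leftrightarrow$' that was illustrated by the program~\eqref{f:pnotp}.

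The central step is a substitution lemma: for any formula $\varphi$, atom $p$, formula $\alpha$ and interpretation $M$,
$$M(\varphi[\alpha/p]) = M_\alpha(\varphi),$$
where $M_\alpha$ is the interpretation that coincides with $M$ on every atom except $p$, to which it assigns the value $M(\alpha)$. Here $M_\alpha$ is a legitimate $\X5$-interpretation because every five-valued assignment over $\At$ corresponds to a consistent pair $\tuple{H,T}$ with $H \subseteq T$. The proof is a routine structural induction on $\varphi$: the base case $\varphi = p$ gives $p[\alpha/p] = \alpha$ and hence $M(\alpha) = M_\alpha(p)$; the cases $\varphi = q \neq p$ and $\varphi = \bot$ are immediate; and the inductive cases for $\wedge$, $\vee$, $\to$ and $\sneg$ follow because substitution distributes over every connective while the valuation of each connective in Definition~\ref{def:valuation} is a fixed function of the values of its arguments.

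With this lemma in hand, the theorem follows directly. Fix any $M$. By the lemma, $M(\varphi[\alpha/p]) = M_\alpha(\varphi)$ and $M(\varphi[\beta/p]) = M_\beta(\varphi)$. Since $\models \alpha \Leftrightarrow \beta$ gives $M(\alpha) = M(\beta)$, the two modified interpretations $M_\alpha$ and $M_\beta$ assign the same value to $p$ and agree with $M$, hence with each other, on every other atom; thus $M_\alpha = M_\beta$ and therefore $M(\varphi[\alpha/p]) = M(\varphi[\beta/p])$. As $M$ was arbitrary, we conclude $M(\varphi[\alpha/p]) = M(\varphi[\beta/p])$ for all $M$, that is, $\models \varphi[\alpha/p] \Leftrightarrow \varphi[\beta/p]$.

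I expect the only delicate point to be the correct formulation of the substitution lemma, specifically the observation that the valuation is compositional (a homomorphism into the five-valued algebra), so that replacing an atom by a formula corresponds exactly to feeding that formula's value into the valuation of the host formula. Once this compositionality is made explicit, every connective case of the induction is mechanical. It is worth stressing that the argument genuinely requires `$\Leftrightarrow$' rather than `$\leftrightarrow$': the latter does not force equality of values, so $M_\alpha$ and $M_\beta$ could differ at $p$ and the substitution could then fail, exactly as witnessed by replacing $p \wedge \neg p$ with $\bot$ in~\eqref{f:pnotp}.
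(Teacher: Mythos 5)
Your proof is correct and follows exactly the route the paper sets up: it uses the five-valued semantics of Definition~\ref{def:valuation}, the observation that $M(\varphi \Leftrightarrow \psi)=2$ iff $M(\varphi)=M(\psi)$, and the compositionality of the valuation via a standard substitution lemma proved by structural induction. The one point worth being careful about --- that every five-valued assignment over $\At$ arises from a legitimate pair $\tuple{H,T}$, so $M_\alpha$ is a genuine $\X5$-interpretation --- is one you correctly identify and address.
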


Still, there are some cases in which $\leftrightarrow$ can be used for substitution, provided that the replaced formulas are not in the scope of explicit negation.

\begin{theorem}\label{th:replace}
Let $\varphi$ be a formula where atom $p$ only occurs outside the scope of explicit negation, and let $\alpha, \beta$ be two formulas satisfying $\models \alpha \leftrightarrow \beta$.
Then, $\models \varphi[\alpha/p] \leftrightarrow \varphi[\beta/p]$.\qed
\end{theorem}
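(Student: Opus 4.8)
The plan is to reduce weak equivalence to agreement on the satisfaction relation alone, and then push that agreement through $\varphi$ by structural induction, exploiting the fact that the only place where satisfaction consults falsification is under $\sneg$ --- precisely where $p$ is forbidden.

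First I would reformulate both the hypothesis and the conclusion. Inspecting the $\leftrightarrow$ table in Figure~\ref{fig:tables} (or, equivalently, unfolding $\leftrightarrow$ through the implication clause of Definition~\ref{def:satfals}), one sees that $M(\alpha\leftrightarrow\beta)=2$ for every $M$ holds iff for every interpretation $\tuple{H,T}$ we have $\tuple{H,T}\models\alpha$ exactly when $\tuple{H,T}\models\beta$. In other words, weak equivalence says nothing about falsification: it only forces the two formulas to have the same satisfying interpretations. The same reformulation applied to the goal shows that $\models\varphi[\alpha/p]\leftrightarrow\varphi[\beta/p]$ is equivalent to the statement that, for every $\tuple{H,T}$, $\tuple{H,T}\models\varphi[\alpha/p]$ iff $\tuple{H,T}\models\varphi[\beta/p]$.

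Next I would prove this last statement by structural induction on $\varphi$, keeping the interpretation universally quantified so that the induction hypothesis is available at both $\tuple{H,T}$ and $\tuple{T,T}$. The base case $\varphi=p$ is exactly the reformulated hypothesis, and the cases $\varphi=\bot$ or $\varphi=q$ with $q\neq p$ are trivial since the substitution does nothing. For $\varphi=\psi_1\wedge\psi_2$ and $\varphi=\psi_1\vee\psi_2$ the satisfaction clauses depend only on the satisfaction of the immediate subformulas, so the induction hypothesis for $\psi_1,\psi_2$ transfers directly. For $\varphi=\psi_1\to\psi_2$ the implication clause of Definition~\ref{def:satfals} evaluates satisfaction of $\psi_1,\psi_2$ at both $\tuple{H,T}$ and $\tuple{T,T}$; applying the induction hypothesis at each of these two interpretations shows that conditions $(i)$ and $(ii)$ hold for the $\alpha$-version iff for the $\beta$-version.

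The case that makes the hypothesis on $p$ essential is $\varphi=\sneg\psi$. Since $p$ occurs only outside the scope of explicit negation, no occurrence of $p$ can lie inside $\psi$; hence $\psi[\alpha/p]=\psi=\psi[\beta/p]$ and therefore $\varphi[\alpha/p]=\sneg\psi=\varphi[\beta/p]$ syntactically, so satisfaction trivially agrees. I expect this to be the conceptual crux rather than a technical obstacle: it is exactly the clause where satisfaction would otherwise have to consult the \emph{falsification} of the substituted formula (by Observation~\ref{obs:ht}, falsification only enters through $\sneg$), and weak equivalence gives us no control over falsification. By forbidding $p$ under $\sneg$, the substitution never reaches a position where falsification matters, which is what lets the induction close using satisfaction agreement alone.
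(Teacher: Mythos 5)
Your proof is correct and follows essentially the route one expects here (and that the paper takes): reduce validity of $\leftrightarrow$ to agreement of the satisfaction relation `$\models$' alone, then close a structural induction over $\wedge,\vee,\to$ using the fact that their satisfaction clauses in Definition~\ref{def:satfals} consult only satisfaction (at $\tuple{H,T}$ and $\tuple{T,T}$), with the $\sneg$ case trivialised because the hypothesis makes the substitution vacuous there. The only point worth stating explicitly when you write it up is that the induction hypothesis must be quantified over all interpretations so it can be invoked at $\tuple{T,T}$ in the implication case --- which you already noted.
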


%
%
%

An important property of ASP related to HT equivalence is \emph{strong equivalence}.
We say that two programs (resp. theories) $\Gamma$ and $\Gamma'$ are \emph{strongly equivalent} iff $\Gamma \cup \Delta$ and $\Gamma' \cup \Delta$ have the same answer sets (resp. equilibrium models), for any additional program (resp. theory) $\Delta$.
When we talk about strong equivalence of formulas $\alpha$ and $\beta$ we assume they correspond to the singleton theories $\{\alpha\}$ and $\{\beta\}$.
As shown in~\cite{LPV01} (for the case without explicit negation), two programs or theories are strongly equivalent if and only if they are HT equivalent.
Since the `$\leftrightarrow$' relation in HT is congruent, there is no difference between strong equivalence (replacing formulas in a theory) and substitution (replacing subformulas in a formula).
However, as explained in~\cite{Ortiz07}, once congruence is lost, we can further refine strong equivalence in the following way.
\begin{definition}[Strong equivalence on substitution]
We say that two formulas $\alpha$ and $\beta$ are \emph{strongly equivalent on substitutions} if $\Delta \cup \{ \varphi[\alpha/p] \} $ and $\Delta \cup \{ \varphi[\beta/p] \}$ have the same equilibrium models, for any formula $\varphi$ and theory $\Delta$.
\end{definition}

The proof of the next lemma can be obtained following similar steps to the proof of the main theorem in~\cite{LPV01} replacing atoms in that case by explicit literals in ours.

\begin{lemma}\label{lem:strong.equivalence.aux}
Let $\alpha$ and $\beta$ be two formulas and be an interpretation such that $\tuple{H,T} \models\alpha$ but $\tuple{H,T} \not\models\beta$.
Then, there is a finite theory $\Delta$ such that
$\tuple{T,T} $ is an equilibrium model of one of $\Delta \cup \set{\beta}$, $\Delta \cup \set{ \alpha }$ but not of both.\qed
\end{lemma}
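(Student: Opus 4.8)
The plan is to adapt the classic argument of~\cite{LPV01} showing that strong equivalence coincides with HT-equivalence, using Observation~\ref{obs:ht} to let \emph{explicit literals} play the role that atoms play there. Since the statement only concerns $\tuple{T,T}$, I would fix the ``there'' component to $T$ throughout and vary the ``here'' component $H' \subseteq T$. A preliminary remark I shall use in both cases is that Theorem~\ref{th:persistence}(i), applied to $\tuple{H,T}\models\alpha$, gives $\tuple{T,T}\models\alpha$. The construction of $\Delta$ then splits according to whether $\tuple{T,T}\models\beta$ or not.

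In the case $\tuple{T,T}\not\models\beta$, I would take $\Delta$ to be the set of facts $\setm{\ell}{\ell \in T}$, one explicit literal for each member of $T$. Any model $\tuple{H',T}$ of $\Delta$ must contain every $\ell\in T$ in its ``here'' part, forcing $H'=T$; hence $\tuple{T,T}$ is the unique model of $\Delta$ with this ``there'' part. Since $\tuple{T,T}\models\alpha$, this makes $\tuple{T,T}$ a (trivially minimal) equilibrium model of $\Delta\cup\set{\alpha}$, whereas $\tuple{T,T}\not\models\beta$ means it is not even a model of $\Delta\cup\set{\beta}$. So exactly one of the two theories has $\tuple{T,T}$ as an equilibrium model.

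In the remaining case $\tuple{T,T}\models\beta$, note that $H=T$ is impossible, as it would give $\tuple{H,T}=\tuple{T,T}\models\beta$, contradicting the hypothesis; so $H \subset T$ is proper. Here I would use the LPV-style context
\[
\Delta \ \eqdef\ \setm{\ell}{\ell \in H}\ \cup\ \setm{\ell_i \to \ell_j}{\ell_i,\ell_j \in T\setminus H},
\]
whose facts force $H \subseteq H'$ and whose implications force the members of $T\setminus H$ to behave in an ``all-or-nothing'' fashion in the ``here'' part. The goal is to show that the models $\tuple{H',T}$ of $\Delta$ are exactly $\tuple{H,T}$ and $\tuple{T,T}$: an intermediate $H'$ with $H \subset H' \subset T$ would contain some $\ell_i\in T\setminus H$ while omitting some $\ell_j\in T\setminus H$, thereby falsifying $\ell_i\to\ell_j$. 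Granting this, $\tuple{H,T}\models\Delta\cup\set{\alpha}$ together with $H \subset T$ shows that $\tuple{T,T}$ is not an equilibrium model of $\Delta\cup\set{\alpha}$, while $\tuple{H,T}\not\models\beta$ leaves $\tuple{T,T}$ (which does model $\Delta\cup\set{\beta}$) as the unique model with ``there'' part $T$, hence an equilibrium model of $\Delta\cup\set{\beta}$.

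The main obstacle I anticipate is verifying this two-model property cleanly when the elements of $T\setminus H$ are arbitrary explicit literals rather than atoms. The key point is that, by Definition~\ref{def:satfals}, satisfaction of an explicit literal $\ell$ (whether $p$ or $\sneg p$) at $\tuple{H',T}$ reduces to membership $\ell\in H'$, so the evaluation remains purely membership-based; and for the implications one must check both clauses $(i)$ and $(ii)$ of the $\to$ semantics. Clause $(ii)$ holds automatically since $\ell_i,\ell_j\in T$ give $\tuple{T,T}\models\ell_j$, and clause $(i)$ reduces exactly to ``$\ell_i\in H'$ implies $\ell_j\in H'$''. Once this is established, the enumeration of models with ``there'' part $T$ proceeds just as in~\cite{LPV01}, with explicit literals substituted for atoms, completing the argument.
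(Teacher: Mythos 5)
Your proposal is correct and follows exactly the route the paper intends: it reconstructs the main-theorem argument of~\cite{LPV01} with explicit literals playing the role of atoms, using the same case split on whether $\tuple{T,T}\models\beta$ and the same two context constructions (facts for all of $T$, respectively facts for $H$ plus the implications over $T\setminus H$). The verification of clauses $(i)$ and $(ii)$ for the implications and the membership-based evaluation of explicit literals are precisely the points the paper's one-line proof sketch delegates to the reader, so nothing is missing.
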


\begin{theorem}\label{thm:strong.equivalence}
Formulas $\alpha$ and $\beta$ are strongly equivalent iff $\models \alpha \leftrightarrow \beta$.\qed
\end{theorem}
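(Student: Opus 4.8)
The plan is to reduce the statement to the equality of model sets and then prove soundness and completeness separately, with the completeness direction resting entirely on Lemma~\ref{lem:strong.equivalence.aux}. First I would establish a bridge observation: $\models \alpha \leftrightarrow \beta$ holds if and only if $\alpha$ and $\beta$ are satisfied by exactly the same $\X5$-interpretations, i.e.\ $\tuple{H,T} \models \alpha$ iff $\tuple{H,T} \models \beta$ for every $\tuple{H,T}$. Inspecting the $\leftrightarrow$ table in Figure~\ref{fig:tables}, $M(\alpha \leftrightarrow \beta)=2$ holds precisely when $M(\alpha)$ and $M(\beta)$ are either both equal to $2$, both equal to $1$, or both $\leq 0$; by Theorem~\ref{th:corresp} this is exactly the statement that satisfaction at $\tuple{H,T}$ agrees \emph{and} satisfaction at $\tuple{T,T}$ agrees. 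Taking $H=T$ shows the ``there'' agreement is a special case of the ``here'' agreement, so validity of $\alpha \leftrightarrow \beta$ is equivalent to pointwise agreement of the relation $\models$ at all interpretations.

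For the right-to-left (soundness) direction, I would assume the two model sets coincide. For any theory $\Delta$, an interpretation models $\Delta \cup \set{\alpha}$ iff it models $\Delta$ and $\alpha$, iff it models $\Delta$ and $\beta$ (by model equality), iff it models $\Delta \cup \set{\beta}$. Hence $\Delta \cup \set{\alpha}$ and $\Delta \cup \set{\beta}$ have identical sets of $\X5$-models. Since the equilibrium models in Definition~\ref{def:eqmodel} are determined solely by the set of models --- a total model $\tuple{T,T}$ is in equilibrium iff no strictly smaller model $\tuple{H,T}$ with $H \subset T$ exists --- the two theories share the same equilibrium models, so $\alpha$ and $\beta$ are strongly equivalent.

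For the left-to-right (completeness) direction I would argue by contraposition. If $\not\models \alpha \leftrightarrow \beta$, then by the bridge observation the relations $\models$ disagree, so there is an interpretation $\tuple{H,T}$ with, say, $\tuple{H,T} \models \alpha$ but $\tuple{H,T} \not\models \beta$; the opposite case is symmetric, swapping the roles of $\alpha$ and $\beta$. Lemma~\ref{lem:strong.equivalence.aux} then supplies a finite theory $\Delta$ for which $\tuple{T,T}$ is an equilibrium model of exactly one of $\Delta \cup \set{\alpha}$ and $\Delta \cup \set{\beta}$. These two theories therefore do not have the same equilibrium models, witnessing that $\alpha$ and $\beta$ are not strongly equivalent.

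The genuinely hard part is concentrated in Lemma~\ref{lem:strong.equivalence.aux}, which we may assume: its construction of the distinguishing theory $\Delta$ mirrors the classical argument of~\cite{LPV01} but must be carried out with explicit literals in place of atoms. The only subtlety remaining in the present proof lies in the bridge observation, where the failure of $\leftrightarrow$ to be a congruence makes it essential to verify that the designated value $2$ of $\alpha \leftrightarrow \beta$ really does force pointwise agreement of $\models$, even though it does \emph{not} force agreement of the underlying five-valued assignment itself.
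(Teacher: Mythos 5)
Your proposal is correct and follows essentially the route the paper intends: soundness by observing that validity of $\alpha \leftrightarrow \beta$ is equivalent to $\alpha$ and $\beta$ having the same $\X5$-models (so adding any $\Delta$ preserves equality of model sets and hence of equilibrium models), and completeness by contraposition via Lemma~\ref{lem:strong.equivalence.aux}, which is exactly why that lemma is stated immediately before the theorem. Your careful check of the $\leftrightarrow$ truth table against Theorem~\ref{th:corresp} correctly handles the one subtlety, namely that the designated value $2$ forces agreement of $\models$ at both $\tuple{H,T}$ and $\tuple{T,T}$ without forcing equality of the five-valued assignments.
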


\begin{theorem}\label{th:substeq}
Formulas $\alpha$ and $\beta$ are strongly equivalent on substitutions iff $\models \alpha \Leftrightarrow \beta$.\qed
\end{theorem}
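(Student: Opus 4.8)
The plan is to prove both directions by reducing the substitution-based equivalence to ordinary weak equivalence ($\leftrightarrow$) inside two carefully chosen contexts, and then invoking the already-established characterisation of strong equivalence, Theorem~\ref{thm:strong.equivalence}, which states that $\gamma$ and $\delta$ are strongly equivalent iff $\models \gamma \leftrightarrow \delta$. Throughout I rely on the basic fact that validity of $\Leftrightarrow$ entails validity of $\leftrightarrow$: since $\gamma \Leftrightarrow \delta$ is the conjunction $(\gamma \leftrightarrow \delta) \wedge (\sneg \gamma \leftrightarrow \sneg \delta)$ and $M$ of a conjunction is the $\min$ of the conjuncts, $\models \gamma \Leftrightarrow \delta$ forces $\models \gamma \leftrightarrow \delta$ (and also $\models \sneg \gamma \leftrightarrow \sneg \delta$).

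For the right-to-left direction, I would assume $\models \alpha \Leftrightarrow \beta$ and fix an arbitrary context $\varphi$. By the Substitution Theorem~\ref{th:subst} we immediately obtain $\models \varphi[\alpha/p] \Leftrightarrow \varphi[\beta/p]$, and hence, by the remark above, $\models \varphi[\alpha/p] \leftrightarrow \varphi[\beta/p]$. Theorem~\ref{thm:strong.equivalence} then tells us that $\varphi[\alpha/p]$ and $\varphi[\beta/p]$ are strongly equivalent, i.e.\ $\Delta \cup \{\varphi[\alpha/p]\}$ and $\Delta \cup \{\varphi[\beta/p]\}$ have the same equilibrium models for every theory $\Delta$. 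As $\varphi$ was arbitrary, $\alpha$ and $\beta$ are strongly equivalent on substitutions.

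For the left-to-right direction I would argue by contraposition, assuming $\not\models \alpha \Leftrightarrow \beta$. Unfolding the definition of $\Leftrightarrow$ and using that validity distributes over the conjunction, this assumption is equivalent to: $\not\models \alpha \leftrightarrow \beta$ \emph{or} $\not\models \sneg \alpha \leftrightarrow \sneg \beta$. In the first case I take the trivial context $\varphi = p$, so that $\varphi[\alpha/p] = \alpha$ and $\varphi[\beta/p] = \beta$ are not weakly equivalent, and hence not strongly equivalent by Theorem~\ref{thm:strong.equivalence}; this yields a witnessing $\Delta$ distinguishing $\Delta \cup \{\varphi[\alpha/p]\}$ from $\Delta \cup \{\varphi[\beta/p]\}$. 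In the second case I take the context $\varphi = \sneg p$, so that $\varphi[\alpha/p] = \sneg \alpha$ and $\varphi[\beta/p] = \sneg \beta$; since $\not\models \sneg \alpha \leftrightarrow \sneg \beta$, Theorem~\ref{thm:strong.equivalence} again supplies a distinguishing $\Delta$. Either way, $\alpha$ and $\beta$ fail to be strongly equivalent on substitutions.

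I expect the main obstacle to be the left-to-right direction, and specifically the realisation that the failure of $\Leftrightarrow$ always manifests itself already at the level of $\leftrightarrow$ in one of just two contexts. The crucial step is the clean decomposition of $\not\models \alpha \Leftrightarrow \beta$ into the two disjuncts corresponding to the two conjuncts of $\Leftrightarrow$, together with the observation that the second conjunct, about $\sneg\alpha$ versus $\sneg\beta$, is captured exactly by placing $\alpha$ and $\beta$ under the context $\sneg p$. This is what separates $\Leftrightarrow$ (congruence, substitution) from $\leftrightarrow$ (mere weak equivalence): the $\sneg p$ context probes precisely the information that $\leftrightarrow$ alone discards, and since explicit negation is the only connective in $\X5$ whose value is not monotone in the value of its argument in a way preserved by $\leftrightarrow$, these two contexts suffice.
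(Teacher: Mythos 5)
Your proposal is correct and follows what is essentially the canonical (and surely the paper's intended) route: the right-to-left direction is Theorem~\ref{th:subst} composed with Theorem~\ref{thm:strong.equivalence} after observing that validity of $\Leftrightarrow$ entails validity of both of its $\leftrightarrow$ conjuncts, and the left-to-right direction instantiates the contexts $\varphi=p$ and $\varphi=\sneg p$ to reduce failure of $\Leftrightarrow$ to failure of strong equivalence via Theorem~\ref{thm:strong.equivalence}. Both steps are sound and no gaps remain.
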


The following set of valid equivalences allow us reducing any nested expression with explicit negation to an \emph{explicit negation normal form} (NNF) where $\sneg$ \ is only applied on atoms.
%
\begin{eqnarray}
\sneg \top & \Leftrightarrow & \bot \label{f:nnf1}\\
\sneg \bot & \Leftrightarrow & \top \label{f:nnf2}\\
\sneg (\varphi \wedge \psi) & \Leftrightarrow & \sneg \varphi \,\,\vee \sneg \psi \label{f:nnf3}\\
\sneg (\varphi \vee \psi) & \Leftrightarrow & \sneg \varphi \,\,\wedge \sneg \psi \label{f:nnf4}\\
\sneg \ \sneg \varphi & \Leftrightarrow & \varphi \label{f:nnf5}\\
\sneg \neg \varphi & \Leftrightarrow & \neg \neg \varphi \label{f:nnf6}
\end{eqnarray}
For instance, we can reduce the nested expression~\eqref{f:pnotp} to NNF as follows:
\[
\begin{array}{rcll}
\sneg (p \wedge \neg p) & \Leftrightarrow & \sneg p \vee \sneg \neg p & \mbox{ by } \eqref{f:nnf3}\\
& \Leftrightarrow & \sneg p \vee \neg \neg p & \mbox{ by } \eqref{f:nnf6}
\end{array}
\]
Programs in NNF correspond to the original syntax in~\cite{LTT99}.
That paper provided several transformations that allowed reducing any program in NNF to a regular program.
These transformations included commutativity and associativity of conjunction and disjunction (which are obviously satisfied in $\X5$) plus the equivalences in the following proposition.

\begin{proposition}
The following formulas are $\X5$ tautologies:
\begin{eqnarray}
\varphi \wedge (\psi \vee \gamma) \Leftrightarrow (\varphi \wedge \psi) \vee (\varphi \wedge \gamma) & & 
\varphi \vee (\psi \wedge \gamma) \Leftrightarrow (\varphi \vee \psi) \wedge (\varphi \vee \gamma)
\label{f:distrib} \\
\varphi \wedge \bot \Leftrightarrow \bot & &
\varphi \vee \top \Leftrightarrow \top 
\label{f:anhil} \\
\varphi \wedge \top \Leftrightarrow \varphi & & 
\varphi \vee \bot \Leftrightarrow \varphi 
\label{f:neut} \\
\neg (\varphi \wedge \psi) \Leftrightarrow \neg \varphi \vee \neg \psi & &
\neg (\varphi \vee \psi) \Leftrightarrow \neg \varphi \wedge \neg \psi
\label{f:demorgan} \\
\neg \top \Leftrightarrow \bot & & 
\neg \bot \Leftrightarrow \top \label{f:notconst}
\end{eqnarray}
\vspace{-20pt}
\begin{eqnarray}
\neg \neg \neg \varphi & \Leftrightarrow & \neg \varphi \label{f:triplenot}\\
\varphi \to \psi \wedge \gamma & \Leftrightarrow & (\varphi \to \psi) \wedge (\varphi \to \gamma) \label{f:andhead}\\
\varphi \vee \psi \to \gamma & \Leftrightarrow & (\varphi \to \gamma) \wedge (\psi \to \gamma) \label{f:orbody}\\
\varphi \wedge \neg \neg \psi \to \gamma & \Leftrightarrow & \varphi \to \gamma \vee \neg \psi \label{f:notbody}\\
\varphi \to \gamma \vee \neg \neg \psi & \Leftrightarrow & \varphi \wedge \neg \psi \to \gamma \label{f:nothead}
\end{eqnarray}
and correspond to the transformations in~\cite{LTT99}.\qed
\end{proposition}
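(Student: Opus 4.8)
The plan is to work entirely in the five-valued setting of Definition~\ref{def:valuation}. Recall from the discussion following Figure~\ref{fig:tables} that $M(\alpha \Leftrightarrow \beta)=2$ holds if and only if $M(\alpha)=M(\beta)$. Hence proving each displayed equivalence amounts to showing that the two sides denote the same function on the chain $V=\{-2,-1,0,1,2\}$, under the operations $\wedge=\min$, $\vee=\max$, $\sneg\, x=-x$, $\bot=-2$, $\top=2$, and with $\neg$ and $\to$ read off from the tables in Figure~\ref{fig:tables}. Since $V$ is finite, each identity is in principle a bounded check; the point is to organise the cases so that the verification is transparent rather than exhaustive.

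First I would dispose of the purely lattice-theoretic laws \eqref{f:distrib}, \eqref{f:anhil} and \eqref{f:neut}. Because $(V,\leq)$ is totally ordered, $(V,\min,\max)$ is a bounded distributive lattice whose bottom is $-2=M(\bot)$ and whose top is $2=M(\top)$; the two distributivity laws, the two annihilation laws and the two neutrality laws are then exactly the standard identities of such a lattice, valid pointwise for any assignment of values to $\varphi,\psi,\gamma$. Next come the laws governing default negation, \eqref{f:demorgan}, \eqref{f:notconst} and \eqref{f:triplenot}. The key observation is that, by the $\neg$-column of Figure~\ref{fig:tables}, $\neg$ is a two-valued \emph{threshold} map: $M(\neg\varphi)=2$ when $M(\varphi)\leq 0$ and $M(\neg\varphi)=-2$ when $M(\varphi)>0$. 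Being antitone with range $\{-2,2\}$, it turns $\min$ into $\max$ and vice versa, giving the De Morgan laws \eqref{f:demorgan} after checking only the threshold ``$\leq 0$ / $>0$'' of each argument; the constant laws \eqref{f:notconst} and the collapse \eqref{f:triplenot} follow by iterating the map on the two points $\{-2,2\}$ to which every formula in the scope of $\neg$ is sent.

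The remaining laws \eqref{f:andhead}--\eqref{f:nothead} involve implication and are where the real work lies. Here I would perform a finite case analysis driven by the defining clause $M(\varphi\to\psi)=2$ if $M(\varphi)\leq\max(M(\psi),0)$ and $M(\varphi\to\psi)=M(\psi)$ otherwise. For \eqref{f:andhead} and \eqref{f:orbody} one splits on whether the antecedent value is dominated by the consequent value(s), checking that the common firing threshold makes both sides agree; these mirror the standard HT arguments for the satisfaction relation (unsurprising, since all four formulas are $\sneg$-free and so, by Observation~\ref{obs:ht}, their $\models$-behaviour coincides with ordinary HT), while the negative behaviour, governed by the clause $\tuple{T,T}\models\varphi$ and $\tuple{H,T}\falsif\psi$, is captured automatically once value equality is established. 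For the negation-shifting laws \eqref{f:notbody} and \eqref{f:nothead} I would exploit that the subformula $\neg\neg\psi$ takes only the values $\{-2,2\}$ and, together with $\neg\psi$, is fixed by the sign of $M(\psi)$: splitting on $M(\psi)>0$ versus $M(\psi)\leq 0$ collapses $\neg\neg\psi$ and $\neg\psi$ to constants, after which each side reduces to a plain implication already handled above.

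I expect these last two identities to be the main obstacle, because they simultaneously mix $\to$, the threshold behaviour of $\neg\neg$, and the lattice operations, so the bookkeeping of which value each side returns in the various regions must be done with care; the sign-split on $M(\psi)$ is precisely the device that keeps this manageable. Once all ten identities are verified as equalities of functions on $V$, the correspondence with the transformations of~\cite{LTT99} is immediate, since those transformations are syntactically the displayed schemata.
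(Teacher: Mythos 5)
Your proposal is correct and follows essentially the same route as the paper, which leaves this proposition as a routine verification against the five-valued truth tables of Figure~\ref{fig:tables}, using the stated fact that $M(\varphi \Leftrightarrow \psi)=2$ iff $M(\varphi)=M(\psi)$. Your organisation of the checks (chain-lattice identities for \eqref{f:distrib}--\eqref{f:neut}, the threshold behaviour of $\neg$ for \eqref{f:demorgan}--\eqref{f:triplenot}, and the sign-split on $M(\psi)$ that collapses $\neg\neg\psi$ and $\neg\psi$ to constants in \eqref{f:notbody} and \eqref{f:nothead}) is sound and indeed makes each case close correctly.
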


For instance, as we saw, \eqref{f:pnotp} was equivalent to $\sneg p \vee \neg \neg p$ but this can be further transformed into the regular rule $\neg p \to \sneg p$ commonly used to assign falsity of $p$ by default.
\begin{example}[Example~\ref{ex:bird} continued]
Rule \eqref{f:bird} can be transformed as follows:
\[
\begin{array}{rcl@{\ \ \ }l}
\eqref{f:bird}  & \Leftrightarrow & 
\neg \I{bird} \vee \neg\!\!\sneg \I{flies} \to \ \sneg (\I{bird} \wedge \sneg \I{flies})
& \mbox{by } \eqref{f:demorgan}\\
& \Leftrightarrow & 
\neg \I{bird} \vee \neg\!\!\sneg \I{flies} \to \ \sneg \I{bird} \vee \sneg \ \sneg \I{flies}
& \mbox{by } \eqref{f:nnf3} \\
& \Leftrightarrow & 
\neg \I{bird} \vee \neg\!\!\sneg \I{flies} \to \ \sneg \I{bird} \vee \I{flies}
& \mbox{by } \eqref{f:nnf5} \\
& \Leftrightarrow & 
(\neg \I{bird} \to \ \sneg \I{bird} \vee \I{flies})\\
&  & 
\wedge (\neg\!\!\sneg \I{flies} \to \ \sneg \I{bird} \vee \I{flies})
& \mbox{by } \eqref{f:orbody} \\
\end{array}
\]
and the last step is a conjunction of two regular rules as in standard ASP solvers.\qed
\end{example}

Reduction to NNF is also possible on arbitrary formulas.
For that purpose, we can combine \eqref{f:nnf1}-\eqref{f:nnf6} with the following valid (weak) equivalence:
\begin{eqnarray}
\sneg (\varphi \to \psi) & \leftrightarrow & \neg \neg \varphi \wedge \sneg \psi \label{f:nnf7}
\end{eqnarray}
However, the reduction must be done with some care, because this last equivalence cannot be shifted to $\Leftrightarrow$.
Indeed, the left and right expressions have different valuations when $M(\varphi)=M(\psi)=1$, obtaining $M(\sneg (\varphi \to \psi))=-2 \neq -1 = M(\neg \neg \varphi \wedge \sneg \psi)$.
Fortunately, Theorem~\ref{th:replace} allows us applying \eqref{f:nnf7} from the outermost occurrence of $\sim$ and then recursively combining with \eqref{f:nnf1}-\eqref{f:nnf6} until $\sim$ is only applied to atoms.
\begin{theorem}
For any formula $\varphi$ there exists a formula $\psi$ in NNF such that $\models \varphi \leftrightarrow \psi$.\qed
\end{theorem}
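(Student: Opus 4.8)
The plan is to give a terminating rewriting argument that pushes every occurrence of explicit negation inwards until it guards only atoms, using the valid equivalences \eqref{f:nnf1}--\eqref{f:nnf7}. The one genuine difficulty is that \eqref{f:nnf7} holds only as a weak equivalence ($\leftrightarrow$) and not as a strong one ($\Leftrightarrow$), so it cannot be applied blindly in an arbitrary context via the Substitution Theorem~\ref{th:subst}; instead it must be applied only to subformulas lying outside the scope of explicit negation, where Theorem~\ref{th:replace} is available. I therefore always rewrite an \emph{outermost} occurrence of $\sneg$, that is, one not contained in the argument of any other $\sneg$.

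Concretely, I would introduce a measure $\mu(\varphi)$ defined as the sum, over all occurrences of $\sneg$ in $\varphi$, of the number of syntax-tree nodes in the immediate argument of that occurrence. I claim that each rule \eqref{f:nnf1}--\eqref{f:nnf7}, applied left to right, strictly decreases $\mu$: rules \eqref{f:nnf1}--\eqref{f:nnf4} replace one $\sneg$ carrying a larger argument by one or two copies of $\sneg$ carrying strictly smaller arguments; \eqref{f:nnf5} and \eqref{f:nnf6} eliminate the outer explicit negation altogether (and in \eqref{f:nnf5} the inner one as well), leaving only default negation $\neg$; and \eqref{f:nnf7} turns $\sneg(\alpha \to \beta)$, whose argument has $|\alpha|+|\beta|+1$ nodes, into $\neg\neg\alpha \wedge \sneg\beta$, where the sole remaining explicit negation $\sneg\beta$ has the strictly smaller argument $\beta$ (the explicit negations buried inside $\alpha$ keep their arguments unchanged). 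Since $\mu$ takes values in the non-negative integers, the rewriting terminates, and a formula admits no rule exactly when every $\sneg$ guards an atom, i.e.\ when it is in NNF.

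The argument then proceeds by strong induction on $\mu(\varphi)$. If $\varphi$ is already in NNF, take $\psi=\varphi$. Otherwise some $\sneg$ is applied to a non-atom, and there must in fact be an \emph{outermost} such occurrence: an outermost $\sneg$ guarding an atom contains no further $\sneg$ in its scope, so every deeper occurrence is buried inside an outermost $\sneg$ whose own argument is compound. I pick such an outermost $\sneg\chi$ and rewrite it according to the principal connective of $\chi$ (one of $\top$, $\bot$, $\wedge$, $\vee$, $\sneg$, $\to$), obtaining $\varphi'$. For the strong-equivalence rules \eqref{f:nnf1}--\eqref{f:nnf6} the replacement is justified in any context by Theorem~\ref{th:subst}. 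For \eqref{f:nnf7} (the case $\chi=\alpha\to\beta$), since $\sneg\chi$ is outermost it is not in the scope of explicit negation, so writing $\varphi = \xi[\sneg\chi/q]$ for a fresh atom $q$ that occurs in $\xi$ only outside the scope of $\sneg$, Theorem~\ref{th:replace} applied to the instance $\models \sneg(\alpha\to\beta)\leftrightarrow \neg\neg\alpha\wedge\sneg\beta$ of \eqref{f:nnf7} yields $\models \varphi \leftrightarrow \varphi'$. In every case $\models \varphi\leftrightarrow\varphi'$ (recalling that $\Leftrightarrow$-validity implies $\leftrightarrow$-validity, since $M(\alpha\Leftrightarrow\beta)=2$ forces $M(\alpha\leftrightarrow\beta)=2$) and $\mu(\varphi')<\mu(\varphi)$, so the induction hypothesis provides an NNF formula $\psi$ with $\models \varphi'\leftrightarrow\psi$.

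It remains to close the chain by transitivity of $\leftrightarrow$-validity, which can be read off the table of $\leftrightarrow$ in Figure~\ref{fig:tables}: $M(\delta\leftrightarrow\eta)=2$ holds exactly when $M(\delta)$ and $M(\eta)$ are both non-positive or are equal and positive, and one checks directly that this relation on the five values is transitive; hence $\models\varphi\leftrightarrow\varphi'$ and $\models\varphi'\leftrightarrow\psi$ give $\models\varphi\leftrightarrow\psi$ with $\psi$ in NNF. The step I expect to require the most care is precisely the bookkeeping around \eqref{f:nnf7}: verifying that the targeted implication genuinely lies outside the scope of every explicit negation, so that the weak equivalence may legitimately be lifted through Theorem~\ref{th:replace}, rather than carelessly appealing to the Substitution Theorem~\ref{th:subst}, which would be unsound for a merely $\leftrightarrow$-valid equivalence.
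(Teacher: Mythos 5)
Your proposal is correct and follows essentially the same route the paper indicates: push \mbox{`$\sneg$'} inward using the strong equivalences \eqref{f:nnf1}--\eqref{f:nnf6} via Theorem~\ref{th:subst}, and apply the merely weak equivalence \eqref{f:nnf7} only at outermost occurrences of \mbox{`$\sneg$'}, where Theorem~\ref{th:replace} licenses the replacement. The termination measure, the observation that a maximal \mbox{`$\sneg$'}-over-a-compound occurrence is automatically outside the scope of every explicit negation, and the check that validity of `$\leftrightarrow$' is transitive are exactly the details needed to make that sketch rigorous.
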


For instance, we can reduce the following formula into NNF as follows:
\begin{eqnarray*}
\sim (a \to \ \sneg b \wedge (c \to d)) & \leftrightarrow & \neg \neg a \wedge \sim( \sneg b \wedge (c \to d)) \\
& \leftrightarrow & \neg \neg a \wedge (\sneg \ \sneg b \vee \sneg (c \to d)) \\
& \leftrightarrow & \neg \neg a \wedge (b \vee \neg \neg c \wedge \sneg d)
\end{eqnarray*}
However, we cannot apply~\eqref{f:nnf7} making a replacement in the scope of explicit negation.
A clear counterexample is the formula $\sneg \ \sneg (p \to q)$ that, due to \eqref{f:nnf5}, is strongly equivalent to $p \to q$, but applying \eqref{f:nnf7} inside would incorrectly lead to the nested expression $\sneg (\neg \neg p \wedge \sneg q)$ that can be transformed into the strongly equivalent expression $\neg p \vee q$, different from $p \to q$ in ASP.

\section{Related work}\label{sec:related}

As explained in the introduction, this work is obviously related to the characterisation of `$\sim$' as Nelson's \emph{strong negation}~\cite{Nel49} for intermediate logics.
In particular, the addition of strong negation to HT produces the five-valued logic $\N5$ already present in the original definition of Equilibrium Logic~\cite{Pearce96}.
In fact, the interpretations and the truth values we have chosen for $\X5$ coincide with those for $\N5$, and their evaluation of (non-derived) connectives $\top, \wedge, \vee$ and $\to$ from Figure~\ref{fig:tables} also coincide in both logics, except for one difference in the table of implication: the value for $M(\varphi)=1$ and $M(\psi)=-2$ changes from $-2$ to $-1$ in $\N5$.
This change and its result on derived operators is shown in Figure~\ref{fig:tablesn5} where the different values are framed in rectangles.
\def\minusone{\boxed{\!-1\!\!\!}}
\begin{figure}[htbp]
\centering
$$
\begin{array}{c@{\hspace{20pt}}c}
\begin{array}{r|rrrrr}
\to & -2 & -1 & 0 & 1 & 2\\
\hline
-2 & 2 & 2 & 2 & 2 & 2 \\
-1 & 2 & 2 & 2 & 2 & 2 \\
0 & 2 & 2 & 2 & 2 & 2 \\
1 & \minusone & -1 & 0 & 2 & 2 \\
2 & -2 & -1 & 0 & 1 & 2
\end{array}
&
\begin{array}{c@{\hspace{20pt}}c}
\begin{array}{r|r}
\varphi & \neg \varphi\\
\hline
-2 & 2 \\
-1 & 2 \\
0 & 2 \\
1 & \minusone \\
2 & -2 
\end{array}
\end{array}
\\ \\
\begin{array}{r|rrrrr}
\leftrightarrow & -2 & -1 & 0 & 1 & 2\\
\hline
-2 & 2 & 2 & 2 & \minusone & -2 \\
-1 & 2 & 2 & 2 & -1 & -1 \\
0 & 2 & 2 & 2 & 0 & 0 \\
1 & \minusone & -1 & 0 & 2 & 1 \\
2 & -2 & -1 & 0 & 1 & 2
\end{array}
&
\begin{array}{r|rrrrr}
\Leftrightarrow & -2 & -1 & 0 & 1 & 2\\
\hline
-2 & 2 & 1 & 0 & \minusone & -2 \\
-1 & 1 & 2 & 0 & -1 & -2 \\
0 & 0 & 0 & 2 & 0 & 0 \\
1 & \minusone & -1 & 0 & 2 & 1 \\
2 & -2 & -2 & 0 & 1 & 2
\end{array}
\end{array}
$$
\caption{Truth tables for $\N5$ that differ from $\X5$.}
\label{fig:tablesn5}
\end{figure}
As a result, $\N5$ ceases to satisfy \eqref{f:nnf6} and \eqref{f:nnf7} whose role in the reduction to NNF is respectively replaced by the $\N5$-valid weak equivalences:
\begin{eqnarray}
\sneg \neg \varphi & \leftrightarrow & \varphi \label{f:N1}\\
\sneg (\varphi \to \psi) & \leftrightarrow & \varphi \wedge \sneg \psi \label{f:N2}
\end{eqnarray}
The difference between \eqref{f:nnf7} and \eqref{f:N2} also reveals the effect on falsification of implication in both logics.
While $\tuple{H,T} \falsif \varphi \to \psi$ requires $\tuple{T,T} \models \varphi$ in $\X5$, this is replaced by condition $\tuple{H,T}\models \varphi$ in $\N5$.
Curiously, although these two logics provide a different behaviour for $\sneg$ \ as strong versus explicit negation, they actually have the same evaluation for that connective, while their real technical difference lies on falsity of implication.

The reason why $\N5$ does not capture the extended reduct for nested expressions proposed in this paper is that \eqref{f:triplenot} is not valid in that logic.
This is because, when $M(\varphi)=1$, we get $M(\neg \varphi)=-1 \neq -2 = M(\neg \neg \neg \varphi)$.
It is still possible to define $\N5$ operators in $\X5$ as follows:
\begin{eqnarray*}
\varphi \stackrel{\N5}{\to} \psi & \eqdef & \varphi \to \ \sneg \varphi \vee \psi \\
\stackrel{\N5}{\neg} \varphi & \eqdef & \varphi \to \ \sneg \varphi
\end{eqnarray*}
using here the $\X5$ interpretation for implication.
Analogously, we can also define the $\X5$ operators in $\N5$ in the following way:
\begin{eqnarray*}
\varphi \stackrel{\X5}{\to} \psi & \eqdef & (\varphi \to \psi) \wedge (\sneg \psi \to \ \neg \neg \neg \varphi)  \\
\stackrel{\X5}{\neg} \varphi & \eqdef & \neg \neg \neg \varphi
\end{eqnarray*}
assuming that we interpret implication and $\neg$ under $\N5$ instead.

An interesting connection between both variants is that the addition of the excluded middle axiom schemata $\varphi \vee \neg \varphi$ imposes the restriction of total models $\tuple{T,T}$ both in $\X5$ and in $\N5$.
This means that all atoms and formulas are evaluated in the set $\{-2,0,2\}$, for which the truth tables coincide in these two logics and actually collapse to classical logic with strong negation~\cite{vakarelov1977notes} introduced in Section~\ref{sec:nested}.
This coincidence is important since equilibrium models (and so, answer sets) are total models.

To conclude the section on related work, another possibility for interpreting a second negation `$\sneg$' inside intuitionistic logic was provided by~\cite{FH96} using a \emph{classical} negation interpretation.
Although the idea seems closer to Gelfond and Lifschitz' original terminology for a second negation, it actually provides undesired effects from an ASP point of view.
Classical negation in HT means keeping only the satisfaction relation `$\models$' in Definition~\ref{def:satfals} (falsification `$\falsif$' is not needed) but replacing the condition for `$\sneg$' so that $\tuple{H,T} \models \sneg \varphi$ if $\tuple{H,T} \not\models \varphi$.
One important effect of this change is that HT with classical negation ceases to satisfy the persistence property (Theorem~\ref{th:persistence}).
But perhaps a more important problem from the ASP perspective is that $\neg p$ implies $\sneg p$ for any atom $p$.
Thus, the rule $\neg p \to \sneg p$ becomes a tautology in this context, whereas it is normally used in ASP to conclude that $p$ is explicitly false by default.

\section{Conclusions}
\label{sec:conc}

We have introduced a variant of constructive negation in Equilibrium Logic (and its monotonic basis, HT) we called \emph{explicit negation}.
This variant shares some similarities with the previous formalisation based on Nelson's strong negation, but changes the interpretation for falsity of implication.
We have also introduced a reduct-based definition of answer sets for programs with nested expressions extended with explicit negation, proving the correspondence with equilibrium models.

For future work, we will study a possible axiomatisation.
To this aim, it is interesting to observe that the formulas \eqref{f:nnf3}-\eqref{f:nnf5} (in their weak equivalence versions) plus  \eqref{f:N1} and \eqref{f:N2} actually correspond to Vorob'ev axiomatisation~\cite{Vo52a,Vo52b} of strong negation in intuitionistic logic.
As we saw, the role of \eqref{f:N1} and \eqref{f:N2} in $\N5$ is replaced in $\X5$ by \eqref{f:nnf5} and \eqref{f:nnf7}, so an interesting question is whether this replacement may become a complete axiomatisation for explicit negation in $\X5$ or intuitionistic logic in the general case.
We also plan to explore the effect of explicit negation on extensions of equilibrium logic, revisiting the use of strong negation in paraconsistent~\cite{OdintsovP05} and partial~\cite{COP06} equilibrium logic, or considering its combination with partial functions~\cite{Cab11,CabalarCPV14}, and temporal~\cite{ACD+13} or epistemic~\cite{CerroHS15,CFF19} reasoning.

\bibliographystyle{acmtrans}
\bibliography{refs}
\label{lastpage}

\end{document}